\newcommand{\be}{\begin{eqnarray}}
\newcommand{\ee}{\end{eqnarray}}
\newcommand{\bez}{\begin{eqnarray*}}
\newcommand{\eez}{\end{eqnarray*}}
\newcommand{\bbC}{\mathbb{C}}
\newcommand{\bbR}{\mathbb{R}}
\newcommand{\imag}{\mathrm{i}}
\renewcommand{\Re}{\mathrm{Re}}
\renewcommand{\Im}{\mathrm{Im}}
\theoremstyle{plain}
\newtheorem{theorem}{Theorem}[section]
\newtheorem{proposition}[theorem]{Proposition}
\theoremstyle{definition}
\newtheorem{remark}[theorem]{Remark}
\newtheorem{example}[theorem]{Example}
\numberwithin{equation}{section}
\numberwithin{theorem}{section}
\renewcommand{\theequation} {\arabic{section}.\arabic{equation}}
\begin{document}

\title{\bf NLS breathers, rogue waves, and solutions \\ of the Lyapunov equation for Jordan blocks}

\author{
\sc{Oleksandr Chvartatskyi} and \sc{Folkert M\"uller-Hoissen} \\
 \small
 Max-Planck-Institute for Dynamics and Self-Organization, Am Fassberg 17, 37077 G\"ottingen, Germany \\
 E-mail: alex.chvartatskyy@gmail.com, folkert.mueller-hoissen@ds.mpg.de
}

\date{}

\maketitle

\begin{abstract}
The infinite families of Peregrine, Akhmediev and Kuznetsov-Ma breather solutions of the focusing Nonlinear Schr\"odinger 
(NLS) equation are obtained via a matrix version of the Darboux transformation, with a spectral matrix of the 
form of a Jordan block. The structure of these solutions is essentially determined by the corresponding 
solution of the Lyapunov equation. In particular, regularity follows from properties of the Lyapunov equation.
\end{abstract}

\section{Introduction}
\label{sec:intro}
In 1983 Howell Peregrine \cite{Pere83} showed that the ``self-focusing'' Nonlinear Schr\"odinger (NLS)
equation
\be
   \imag \, \mathfrak{q}_t + \mathfrak{q}_{xx} + 2 \, |\mathfrak{q}|^2 \, \mathfrak{q} = 0 
         \label{focNLS}
\ee
admits an exact solution that models a \emph{rogue wave} (or ``freak wave''), i.e., a wave with a significant 
amplitude that seems to ``appear from nowhere'' and ``disappears without a trace'' (cf. \cite{AAS09}). 
The existence of this \emph{Peregrine breather} is due to the modulational (Benjamin-Feir) instability of the 
background solution $e^{2 \imag \, t}$ on which it emerges. Up to this factor,  
it is given by a \emph{rational} expression, 
\bez
      \mathfrak{q} = \Big( 1 - \frac{G + \imag \, H}{D} \Big) \, e^{2 \imag t} \, ,   
\eez
where $D,G,H$ are real polynomials in the independent variables $x$ and $t$. 
That this ``instanton'' solution indeed models physical phenomena, has been demonstrated recently in 
nonlinear fibre optics \cite{KFFMDGAD10}, for deep water waves \cite{CHA11}, and in a multicomponent 
plasma \cite{BSN11}.\footnote{In some of these physical models, $x$ and $t$ as spatial and time variables 
are exchanged in (\ref{focNLS}). } Also see \cite{BKA09,Shri+Geog10}, for example.

The Peregrine breather is the analog of an NLS soliton, on the nontrivial background and for a special value 
of a spectral parameter. One can superpose an arbitrary number of such solutions in a nonlinear way. This includes 
an infinite family of exact solutions, where the $n$th member arises from a nonlinear superposition of $n$ 
Peregrine breathers, by taking a limit where associated spectral parameters tend to a special value. 
These solutions have the above quasi-rational form.  
The next to Peregrine member of this family apparently appeared first in \cite{AAT09}. 
It has also been shown to model certain waves observed in a water tank \cite{CHOA12}. 

To construct this family of solutions, the authors of \cite{AAS09}  
started from a linear system (Lax pair) for the focusing NLS equation, with a specific choice of the spectral parameter. 
Solutions are then constructed using Darboux transformations \cite{Matv+Sall91,Matv00}. The authors of \cite{AAS09} met  
the problem that the usual iteration of Darboux transformations requires a different spectral parameter at each step. 
But addressing quasi-rational solutions, one has to use the same spectral parameter, and this requires some 
modifications of the usual scheme. 
A corresponding improvement of 
the Darboux transformation method has been formulated in \cite{GLL12}. An even richer   
family of quasi-rational solutions than obtained in \cite{AAS09} has been provided in \cite{DGKM10,Duba+Matv11,Gail13}, 
expressing them as quotients of Wronskians. Via Hirota's bilinear method such solutions have been 
derived in \cite{Ohta+Yang12}.

Our present work uses yet another variant of the Darboux method. 
We use a \emph{single}, but \emph{vectorial} Darboux transformation and a linear system that involves a spectral 
parameter \emph{matrix} $Q$. This linear system is ``degenerate'' 
for a special eigenvalue of $Q$. The $n$th order quasi-rational solution of the focusing NLS equation is then 
obtained by choosing for $Q$ an $n \times n$ Jordan block with this special eigenvalue. 
This yields a nice characterization of the generalizations of the Peregrine breather.  
The vectorial Darboux transformation does not meet the problems mentioned in \cite{AAS09} and does not require 
modifications or generalizations as in \cite{GLL12}. 
 
Shortly after Peregrine's publication, a solution \cite{Akhm+Korn87} of the NLS equation was found which is also localized 
in ``time'' $t$.
But in contrast to the Peregrine breather, it is periodic in $x$-direction. 
This \emph{Akhmediev breather} yields the Peregrine breather in a certain limit. Also   
the Akhmediev breather belongs to an infinite family of exact solutions. 

Morover, there is a counterpart of the Akhmediev breather that is localized in $x$-direction and periodic in $t$-direction. 
It is known as \emph{Kuznetsov-Ma breather} \cite{Kuzn77,Ma79} and has been observed in fibre optics \cite{KFFMGWADD12}. 
The Akhmediev and Kuznetsov-Ma breathers can be combined into a common expression, from which we recover one or the other if  
the modulus of a spectral parameter is below or above a certain value. The Peregrine breather is then obtained for 
the transition value, the abovementioned special eigenvalue of $Q$. Both, the Akhmediev and the Kuznetsov-Ma breather, 
have meanwhile been observed in water tank experiments \cite{CKDA14}. Like the Peregrine breather, also the 
Akhmediev-Kuznetsov-Ma (AKM) breather belongs to an infinite sequence of exact solutions with increasing complexity. 

Whereas regularity of the generated solution is an easy problem in case of a single scalar Darboux transformation 
for the NLS equation, it is more involved in case of a vectorial Darboux transformation. But one can use  
properties of the Lyapunov equation (see the Appendices) to solve this problem satisfactorily. 
Moreover, the main purpose of this work is to show that the structure of the members of the families of Akhmediev, 
Kuznetsov-Ma and Peregrine breathers is largely determined by the solution of the rank one Lyapunov equation 
with a Jordan block matrix.

In Section~\ref{sec:DT} we recall a Darboux transformation result for the focusing NLS equation. This is then elaborated 
in Section~\ref{sec:exp_seed} for the abovementioned seed solution.  
Section~\ref{sec:concl} contains concluding remarks.

\section{Darboux transformation for the focusing NLS equation}
\label{sec:DT}
Let $\phi$ be a $2 \times 2$ matrix subject to the AKNS equation
\be
    \imag \, [J , \phi_{t}] + 2 \, \phi_{xx} - [[J,\phi],\phi_{x}] = 0 \, , \label{AKNS}
\ee
where $J = \mbox{diag}(1, -1)$. Decomposing $\phi$ as follows,
\be
    \phi = J \left(\begin{array}{cc} u & \mathfrak{q} \\ \mathfrak{r} & v \end{array}\right)
         = \left(\begin{array}{rr} u & \mathfrak{q} \\ - \mathfrak{r} & - v
         \end{array}\right) \, ,  
           \label{AKNS_decomp} 
\ee
and imposing the reduction condition
\be
        \phi^\dagger = \phi \, ,    \label{focNLS_red}
\ee
where $^\dagger$ indicates Hermitian conjugation, 
implies that $u$ and $v$ are real, $\mathfrak{r} = - \mathfrak{q}^\ast$ (where $^\ast$ denotes complex 
conjugation), and (\ref{AKNS}) becomes 
\be
   && \imag \, \mathfrak{q}_t + \mathfrak{q}_{xx} + \mathfrak{q} \, v_x + u_x \, \mathfrak{q} = 0 \, , \qquad
      (u_x - |\mathfrak{q}|^2)_x = 0 \, , \quad (v_x - |\mathfrak{q}|^2)_x = 0 \, .   \label{prefocNLS}
\ee
Setting possible constants of integration (actually functions of $t$) to zero by integrating the last 
two equations, and eliminating $u_x$ and $v_x$ in the first equation, we obtain the focusing NLS 
equation (\ref{focNLS}). Any solution of the a priori more general system (\ref{prefocNLS}) 
can be transformed into a solution of the focusing NLS equation (cf. \cite{DMH10NLS}, for example). 

The following Darboux transformation is obtained from a general binary Darboux transformation result 
\cite{DMH13SIGMA} in bidifferential calculus, see \cite{CDMH16}.

\begin{proposition}
\label{prop:NLS}
Let $\phi_0$ be a solution of (\ref{AKNS}) and (\ref{focNLS_red}), 
$Q$ a constant $n \times n$ matrix satisfying the spectrum condition
\be
    \sigma(Q) \cap \sigma(-Q^\dagger) = \emptyset \, ,     \label{spec}
\ee  
and $\eta$ an $n \times 2$ matrix solution of the linear 
system
\be
    \imag \, \eta_t = - Q \, \eta_x + \eta \,  \phi_{0,x} \, , \qquad
    \eta_x = - \frac{1}{2} Q \, \eta J - \frac{1}{2} \eta \, [J,\phi_0] \, .   \label{NLS_lin_eqs_theta} 
\ee
Furthermore, let $\Omega$ be an $n \times n$ matrix solution of the Lyapunov equation
\be
       Q \, \Omega + \Omega \, Q^\dagger = \eta \, \eta^\dagger \, .    \label{NLS_Lyapunov} 
\ee
Then, wherever $\Omega$ is invertible,  
\be
    \phi = \phi_0 - \eta^\dagger \, \Omega^{-1} \, \eta     \label{NLS_phi_sol}
\ee
is a new solution of (\ref{AKNS}) and (\ref{focNLS_red}). As a consequence,
the components of $\phi$ yield a solution of (\ref{prefocNLS}).  \hfill $\Box$
\end{proposition}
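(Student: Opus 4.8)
The statement has two parts: the reduction condition (\ref{focNLS_red}), $\phi^\dagger = \phi$, and the AKNS equation (\ref{AKNS}) itself. I would establish the reduction first, since it is short and feeds into the rest. Taking the Hermitian conjugate of the Lyapunov equation (\ref{NLS_Lyapunov}) and using $\phi_0^\dagger = \phi_0$ shows that $\Omega^\dagger$ solves the \emph{same} Lyapunov equation as $\Omega$. Under the spectrum condition (\ref{spec}), the Sylvester-type operator $X \mapsto Q X + X Q^\dagger$ is invertible, so its solution is unique; hence $\Omega^\dagger = \Omega$. Then $(\eta^\dagger\Omega^{-1}\eta)^\dagger = \eta^\dagger(\Omega^\dagger)^{-1}\eta = \eta^\dagger\Omega^{-1}\eta$, and $\phi^\dagger = \phi_0^\dagger - \eta^\dagger\Omega^{-1}\eta = \phi$ follows at once.

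Next I would extract the differential relations that $\Omega$ inherits from (\ref{NLS_lin_eqs_theta}). Differentiating (\ref{NLS_Lyapunov}) in $x$, substituting the second equation of (\ref{NLS_lin_eqs_theta}) for $\eta_x$ and its conjugate for $\eta_x^\dagger$ (using $[J,\phi_0]^\dagger = -[J,\phi_0]$), the terms containing $[J,\phi_0]$ cancel and one is left with $Q\,\Omega_x + \Omega_x\,Q^\dagger = -\tfrac{1}{2}(Q\,\eta J\eta^\dagger + \eta J\eta^\dagger\,Q^\dagger)$. Invertibility of the Lyapunov operator then gives the clean relation $\Omega_x = -\tfrac{1}{2}\,\eta J\eta^\dagger$. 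An analogous computation in $t$, now using the first equation of (\ref{NLS_lin_eqs_theta}), yields a corresponding bilinear expression for $\Omega_t$ (again the $\phi_{0,x}$-terms drop out). The remaining ingredient I would prepare is the \emph{dual} Lyapunov identity, obtained by multiplying (\ref{NLS_Lyapunov}) by $\Omega^{-1}$ on both sides,
\be
   \Omega^{-1} Q + Q^\dagger\,\Omega^{-1} = \Omega^{-1}\eta\,\eta^\dagger\,\Omega^{-1} \, , \label{dual_Lyap}
\ee
which is the device that will let me eliminate all bare occurrences of $Q$ and $Q^\dagger$ later.

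With these in hand, the plan is a direct substitution. Writing $P = \eta^\dagger\Omega^{-1}\eta$, so that $\phi = \phi_0 - P$, I would compute $\phi_x$, $\phi_{xx}$ and $\phi_t$ from the relations above together with $(\Omega^{-1})_x = -\Omega^{-1}\Omega_x\Omega^{-1}$ and its $t$-analogue, and replace every $\Omega^{-1}Q$ and $Q^\dagger\Omega^{-1}$ by means of (\ref{dual_Lyap}). Inserting the result into the left-hand side of (\ref{AKNS}), and using that $\phi_0$ already solves (\ref{AKNS}) to remove the pure-$\phi_0$ contributions, the claim is that the remainder vanishes identically, which establishes (\ref{NLS_phi_sol}) as a new solution.

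I expect the genuine difficulty to lie in this last bookkeeping step: the commutator $[[J,\phi],\phi_x]$ is quadratic in $P$, so expanding it produces a proliferation of terms of the shape $\eta^\dagger\Omega^{-1}(\cdots)\Omega^{-1}\eta$ that must be regrouped and shown to cancel. The organizing principle is that (\ref{dual_Lyap}) converts each $Q$ sandwiched against $\Omega^{-1}$ into the rank-carrying factor $\eta\eta^\dagger$, after which the surviving terms reassemble into $J$-commutators matching exactly what the seed equation for $\phi_0$ annihilates. As a cross-check --- and as an alternative to carrying out the calculation by hand --- one may instead invoke the general binary Darboux theorem of bidifferential calculus from which this proposition is quoted, verifying only that the present $\eta$, $Q$ and $\Omega$ satisfy its hypotheses; the reduction $\phi^\dagger = \phi$ is then the specialization of its Hermiticity reduction to the AKNS operator with $J = \mathrm{diag}(1,-1)$.
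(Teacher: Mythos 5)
Your proposal is correct as far as it goes and takes essentially the same route as the paper, which itself gives no written-out proof: Proposition~\ref{prop:NLS} is quoted from the general binary Darboux theorem in bidifferential calculus \cite{DMH13SIGMA,CDMH16}, and the remark immediately following it sketches exactly the direct computation you outline, resting on the same auxiliary identities $\Omega_x = -\tfrac{1}{2}\,\eta J \eta^\dagger$ and $\imag\,\Omega_t = \tfrac{1}{2}( Q \eta J \eta^\dagger - \eta J \eta^\dagger Q^\dagger ) + \tfrac{1}{2}\,\eta [J,\phi_0] \eta^\dagger$, which (together with the Hermiticity of $\Omega$) you derive correctly from uniqueness of the Lyapunov solution under the spectrum condition. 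The only points to note are that, unlike the $x$-identity, the $t$-identity is not free of the seed --- the explicit $\phi_{0,x}$ terms cancel but the $\tfrac{1}{2}\,\eta[J,\phi_0]\eta^\dagger$ term survives --- and that, exactly like the paper, you leave the final ``direct but lengthy'' expansion asserted rather than executed.
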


(\ref{NLS_lin_eqs_theta}) is a Lax pair for (\ref{AKNS}), i.e., (\ref{AKNS}) is the compatibility 
condition arising from $\eta_{xt} = \eta_{tx}$. Note that $Q$ plays the role of a \emph{matrix} 
spectral parameter. 

Solutions of (\ref{AKNS}) obtained via Proposition~\ref{prop:NLS} in this work indeed yield via (\ref{AKNS_decomp}) 
directly solutions $\mathfrak{q}$ of the focusing NLS equation (\ref{focNLS}). This is so because we start from 
a seed solution that satisfies the focusing NLS equation and which determines the asymptotics of the generated solutions. 

\begin{remark}
Dropping the spectrum condition (\ref{spec}) in Proposition~\ref{prop:NLS}, we need to add 
the following two equations
\bez
   \Omega_x = - \frac{1}{2} \eta \, J \, \eta^\dagger \, ,  \qquad
    \imag \, \Omega_t = \frac{1}{2} ( Q \eta \, J \, \eta^\dagger - \eta J \eta^\dagger Q^\dagger )
                     + \frac{1}{2} \eta \, [J , \phi_0 ] \, \eta^\dagger  \, ,  
\eez
which otherwise follow by differentiation of (\ref{NLS_Lyapunov}) and application of the remaining equations. 
Using these equations, (\ref{NLS_lin_eqs_theta}),  
(\ref{NLS_Lyapunov}) and the assumption that $\phi_0$ solves (\ref{AKNS}), one can also prove the 
proposition by a direct but lengthy computation.  \hfill $\Box$
\end{remark}

\begin{remark}
The (algebraic) Lyapunov equation plays an important role in the stability analysis of systems of ordinary differential 
equations (see, e.g., \cite{Lanc+Tism85}). It is a special case of Sylvester's equation \cite{Horn+John13}. 
An appearance of the latter in a Riemann-Hilbert factorization problem dates back to 1986 \cite{Sakh86}.
Sylvester's equation typically enters the stage when matrix (or operator) versions of integrable (differential or difference) equations 
are considered \cite{March88}, or when matrix methods are applied, e.g., to concisely express iterated B\"acklund or 
Darboux transformations (see, e.g., \cite{SSR13} and references therein, and \cite{ABDM10,XZZ14}). 
It is therefore not surprising that, for various integrable equations, different specializations of 
Sylvester's equation show up via the universal binary Darboux transformation in the framework 
of bidifferential calculus (see \cite{DMH13SIGMA,CDMH16} and references cited there).   
If the ``input matrices'' of the Sylvester equation (in our particular case $Q$ and $Q^\dagger$) 
are diagonal, the solution is a Cauchy-like matrix. The generalization to non-diagonalizable input matrices, hence going 
beyond a ``Cauchy matrix approach'' (see, e.g., \cite{NAH09,Schie10LAA}), reaches solutions that are otherwise 
obtained via higher-order poles in the inverse scattering method or special limits of multi-soliton solutions. 
This generalization is of uttermost importance for our present work. \hfill $\Box$
\end{remark}

\section{Exact solutions obtained from a simple seed solution}
\label{sec:exp_seed}
A simple solution of the focusing NLS equation is given by $\mathfrak{q}_0 = e^{2 \imag t}$.\footnote{Applying 
$t\mapsto \alpha^2 t$ and $x \mapsto \alpha \, x$, with a real constant $\alpha >0$, to any solution $\mathfrak{q}$ 
of the NLS equation, and multiplying the resulting expression by $\alpha$, yields a new solution. In this way 
we obtain in particular the more general solution $ \alpha \, e^{ 2 \imag \alpha^2 t } $. }
This solution is unstable (Benjamin-Feir instability), which may be regarded as the origin of the occurrence of rogue waves. 
A solution of (\ref{AKNS}) and (\ref{focNLS_red}), corresponding to $\mathfrak{q}_0$, is given by 
\bez
     \phi_0 = \left(\begin{array}{cc} x & e^{2 \imag t}  \\ 
                                      e^{-2 \imag t} & - x \end{array}\right) \, .
\eez
We note that $\phi_{0,x} = J$. Writing 
\bez
  \eta = \left( \eta_1 \, , \, \eta_2 \right) \, e^{-\imag J t} 
         = \left( \eta_1 \, e^{-\imag t} \, , \,  \eta_2 \, e^{\imag t} \right) \, ,
\eez
where $\eta_1$ and $\eta_2$ are $n$-component vectors, 
the linear system (\ref{NLS_lin_eqs_theta}) takes the form
\bez
  &&  \imag \, \eta_{j,t} + Q \, \eta_{j,x} = 0 \qquad j=1,2 \, ,  \\
  &&  \eta_{1,x} + \frac{1}{2} Q \, \eta_1 - \eta_2 = 0 \, , \quad
      \eta_{2,x} - \frac{1}{2} Q \, \eta_2 + \eta_1 = 0 \, ,
\eez
which decouples to
\be
     \eta_{1,xx} - (\frac{1}{4} Q^2-I) \, \eta_1 = 0 \, , \qquad
     \imag \, \eta_{1,t} + Q \, \eta_{1,x} = 0 \, , \qquad
     \eta_2 = \frac{1}{2} Q \, \eta_1 + \eta_{1,x}   \, .     \label{eta_lin_sys}
\ee
Given a solution of the first two equations, we have to find a corresponding solution $\Omega$ of (\ref{NLS_Lyapunov}), 
i.e., the rank two Lyapunov equation
\be
       Q \Omega + \Omega Q^\dagger = \eta_1 \, \eta_1^\dagger + \eta_2 \, \eta_2^\dagger \, . \label{NLS_Lyapunov_2}
\ee
The solution can be written\footnote{As a consequence of (\ref{spec}), the Lyapunov equation possesses a 
unique solution.}
as $\Omega = \Omega_1 + \Omega_2$, where $\Omega_1$ and $\Omega_2$ are solutions 
of the rank one Lyapunov equations
\be
       Q \Omega_1 + \Omega_1 Q^\dagger = \eta_1 \, \eta_1^\dagger \, , \qquad
       Q \Omega_2 + \Omega_2 Q^\dagger = \eta_2 \, \eta_2^\dagger \, .     \label{NLS_Lyapunov_3}
\ee
According to Proposition~\ref{prop:NLS}, 
\be
    \mathfrak{q} = \left( 1 - \eta_1^{\dagger} \, \Omega^{-1} \, \eta_2 \right) \, e^{2\,\imag\,t}    \label{NLS_q_sol}
\ee
is then a solution of the focusing NLS equation.

Let $Q$ now be a lower triangular Jordan block, 
\be
   Q = \left( \begin{array}{ccccc} q  & 0      & \cdots & \cdots & 0      \\ 
                                       1  & q  & \ddots  & \ddots & 0      \\ 
                                       0  & \ddots & \ddots & \ddots & \vdots \\
                                   \vdots & \ddots & \ddots & \ddots & 0      \\
                                       0  & \cdots &  0  &   1    & q 
                 \end{array} \right)  \, ,   \label{Q_Jordan_block}
\ee
the case we are concentrating on in this work. The solution of 
the corresponding rank one Lyapunov equation is then given explicitly in Proposition~\ref{prop:rank_one} in Appendix~A. 
We will use an index $(k)$ to specify the matrix size. For example, $Q_{(k)}$ is the $k \times k$ 
matrix version of $Q$ and $\eta_{i (k)} = (\eta_{i1}, \eta_{i2},\ldots,\eta_{ik})^\intercal$. Furthermore, we set 
\bez
     \kappa := 2 \, \Re(q) \, , \qquad 
     \tilde{\eta}_{i,k+1} := \eta_{i,k+1} - \kappa^{-1} \eta_{ik} \, , \quad k=1,2, \ldots \, .
\eez

\begin{example}
\label{ex:n=2Jordan_Lyapunov_sol}
 For $n=2$, using Proposition~\ref{prop:rank_one} we obtain
\bez
  \Omega_{(2)} = \Omega_{(2)1} + \Omega_{(2)2}
               = \frac{1}{ \kappa } \sum_{i=1}^2 \left( \begin{array}{cc}
                |\eta_{i1}|^2 & \eta_{i1} \tilde{\eta}_{i2}^\ast  \\
                 \eta_{i1}^\ast \tilde{\eta}_{i2} &
                    | \tilde{\eta}_{i2} |^2 + \kappa^{-2} |\eta_{i1}|^2 
                     \end{array} \right)  \, .
\eez
The determinant of $\Omega_{(2)}$  can be written in the following concise form, 
\bez
    \det(\Omega_{(2)}) = \kappa^{-4} (|\eta_{11}|^2 + |\eta_{21}|^2)^2 
            + \kappa^{-2} \, |\det(\eta_1,\eta_2)|^2  \, ,
\eez
where $\det(\eta_1,\eta_2) = \eta_{11} \eta_{22} - \eta_{12} \eta_{21}$. 
Hence
\be
     \Omega_{(2)}^{-1} = \frac{ \kappa }{\det(\Omega)} \sum_{i=1}^2 \left( \begin{array}{cc}
               | \tilde{\eta}_{i2} |^2 + \kappa^{-2} |\eta_{i1}|^2  & - \eta_{i1} \tilde{\eta}_{i2}^\ast  \\
                - \eta_{i1}^\ast \tilde{\eta}_{i2} & |\eta_{i1}|^2
                     \end{array} \right) \, .         \label{invOm_n=2}
\ee
This leads to the following solution of the focusing NLS equation, 
\be
    \mathfrak{q} = \left( 1 - \frac{F}{\det(\Omega_{(2)})} \right) \, e^{2\,\imag\,t}   \, ,  \label{n=2sol}
\ee
where
\bez
    F = \frac{1}{4 \Re(q)^3} \Big( \eta_{21} \eta_{11}^* \, ( |\eta_{11}|^2 + |\eta_{21}|^2 ) 
      + 2 \imag \, \Re(q) \, \Im[ (\eta_{11}^*)^2  \det(\eta_1,\eta_2)] \Big) \, ,
\eez  
and $\eta_1,\eta_2$ have to solve the linear system (\ref{eta_lin_sys}). The solutions of the 
latter will be provided below. 

 \hspace{1cm} \hfill $\Box$
\end{example}

 From the Lyapunov equation (\ref{NLS_Lyapunov_2}) one finds that its solutions are nested:
\bez
     \Omega_{(n+1)} = \left( \begin{array}{cc}
                \Omega_{(n)} & B_{(n+1)}  \\
                 B_{(n+1)}^\dagger & \omega_{(n+1)} \end{array} \right)  \, ,                    
\eez
where
\bez
  && B_{(n+1)} = K^{-1} \Big( \eta_{1(n)} \, \eta_{1,n+1}^\ast + \eta_{2(n)} \, \eta_{2,n+1}^\ast
                 - \Omega_{(n)} (0,\ldots,0,1)^\intercal \Big) \, , \\
  && \omega_{(n+1)} = \frac{1}{\kappa} \Big( |\eta_{1,n+1}|^2 + |\eta_{2,n+1}|^2 
                       - 2 \, \Re[(0,\ldots,0,1) B_{(n+1)}] \Big) \, ,
\eez
and $K$ is the Jordan block $Q_{(n)}$ with $q$ replaced by $\kappa$. 
In order to evaluate (\ref{NLS_q_sol}), we need the inverse of $\Omega_{(n+1)}$. 
Using a well-known formula, it is given by 
\be
   \Omega_{(n+1)}^{-1} = \left( \begin{array}{cc}
      \Omega_{(n)}^{-1} - S_{\Omega_{(n)}}^{-1} \Omega_{(n)}^{-1} B_{(n+1)} B_{(n+1)}^\dagger \Omega_{(n)}^{-1}
             & - S_{\Omega_{(n)}}^{-1} \Omega_{(n)}^{-1} B_{(n+1)}  \\
      - S_{\Omega_{(n)}}^{-1} B_{(n+1)}^\dagger \Omega_{(n)}^{-1} &  S_{\Omega_{(n)}}^{-1} 
       \end{array} \right) \, ,   \label{Omega_inverse}
\ee
with the scalar Schur complement 
\bez
    S_{\Omega_{(n)}} = \omega_{(n+1)} - B_{(n+1)}^\dagger \Omega_{(n)}^{-1} B_{(n+1)} \, .
\eez
If $\Omega_{(n)}$ and $\Omega_{(n+1)}$ are invertible, then also $S_{\Omega_{(n)}}$. The above equations allow  
to recursively compute the solutions of the Lyapunov equation (\ref{NLS_Lyapunov_2}) for $n=3,4,\ldots$,  
their inverses, and the corresponding NLS solution (\ref{NLS_q_sol}). The concrete expressions for the solutions 
$\eta_1,\eta_2$ of the linear system (\ref{eta_lin_sys}) will be provided in following subsections.

\begin{example}
\label{ex:n=3Jordan_Lyapunov_sol}
 For $\Omega_{(3)}$ we obtain  
\bez
 &&  B_{(3)} = \frac{1}{ \kappa } \sum_{i=1}^2  \left( \begin{array}{c}   \eta_{i1} \, \tilde{\eta}_{i3}^\ast \\
                \tilde{\eta}_{i2} \, \tilde{\eta}_{i3}^\ast + \kappa^{-2} \eta_{i1} \, \tilde{\eta}_{i2}^\ast 
                          - \kappa^{-3} |\eta_{i1}|^2 \end{array} \right) \, , \\
 && \omega_{(3)} = \frac{1}{ \kappa } \sum_{i=1}^2 \Big( | \tilde{\eta}_{i3} |^2 
          + \kappa^{-2} | \tilde{\eta}_{i2} - \kappa^{-1} \eta_{i1} |^2 + \kappa^{-4} |\eta_{i1}|^2 \Big) \, .                
\eez
Since $\Omega_{(2)}$ and its inverse (see (\ref{invOm_n=2})) have been elaborated in the preceding example,  
we can now use (\ref{Omega_inverse}) to compute the inverse of $\Omega_{(3)}$ and then 
elaborate (\ref{NLS_q_sol}). 
\hfill $\Box$
\end{example}

\begin{remark}
\label{rem:invariance}
Let $Q$ be an $n \times n$ lower triangular Jordan block. Any invertible lower triangular $n \times n$ Toeplitz 
matrix $A$ with constant entries commutes with it. 
As a consequence, (\ref{NLS_q_sol}) is invariant under
\bez
     \eta_1 \mapsto A \, \eta_1 \, , \qquad  \eta_2 \mapsto A \, \eta_2 \, ,
\eez
since this is a symmetry of (\ref{eta_lin_sys}), and from (\ref{NLS_Lyapunov_2}) we find that 
$\Omega \mapsto A \Omega A^\dagger$.    \hfill $\Box$
\end{remark}

\begin{remark}
By use of the matrix determinant lemma, (\ref{NLS_q_sol}) can also be expressed as
\bez
   \mathfrak{q} = \frac{\det(\Omega_{(n)} - \eta_2 \, \eta_1^\dagger)}{\det(\Omega_{(n)})} \,  e^{2\,\imag\,t} \, .
\eez
It follows from Remark~\ref{rem:invariance} that, if $Q$ is an $n \times n$ lower triangular Jordan block, then 
under a transformation of $\eta_1$ and $\eta_2$ with  
an invertible lower triangular Toeplitz matrix $A$ we have $\det(\Omega_{(n)}) \mapsto |a_1|^2 \det(\Omega_{(n)})$ and 
$\det(\Omega_{(n)} - \eta_2 \, \eta_1^\dagger) \mapsto |a_1|^2 \det(\Omega_{(n)} - \eta_2 \, \eta_1^\dagger)$, where 
$a_1$ is the eigenvalue of $A$.
\hfill $\Box$
\end{remark}

\subsection{The infinite family of Akhmediev and Kuznetsov-Ma breathers}
If $\frac{1}{4} Q^2-I$ is invertible, then it possesses a square root\footnote{Another square root is $-\Lambda$, 
of course. We note that a matrix may possess more than two square roots (see, e.g., \cite{JOR01,Hasa97}), but this 
is not of relevance here. } 
$\Lambda$ \cite{Cros+Lanc74}, and the linear system (\ref{eta_lin_sys}) admits the following solutions,
\be
    \eta_1 = \cosh(\Theta) \, a  \, , \qquad
    \eta_2 = \frac{1}{2} Q \, \eta_1 + \Lambda \, \sinh(\Theta) \, a   \, ,   \label{eta_sol}
\ee
with a constant $n$-component vector $a$ and 
\bez
     \Theta := \Lambda \, (x + \imag \, Q t) + C \, ,
\eez
where $C$ is a constant $n \times n$ matrix that commutes with $\Lambda$ (and thus with $Q$). 

\begin{remark}
The solution of (\ref{NLS_Lyapunov}), with $\eta$ given by (\ref{eta_sol}), can also be expressed as
\bez
 \Omega_{(n)} = \cosh(\Theta) X_{11} \cosh(\Theta^\dagger) 
          + \cosh(\Theta) X_{12} \sinh(\Theta^\dagger) + \sinh(\Theta) X_{21} \cosh(\Theta^\dagger) 
          + \sinh(\Theta) X_{22} \sinh(\Theta^\dagger) \, ,
\eez
where the four constant $n \times n$ matrices $X_{ij}$ have to satisfy the Lyapunov equations
\bez
  && Q X_{11} + X_{11} \, Q^\dagger = a a^\dagger + \frac{1}{4} \, Q \, a a^\dagger \, Q^\dagger  \, , \qquad
     Q X_{22} + X_{22} \, Q^\dagger = \Lambda \, a a^\dagger \, \Lambda^\dagger \, , \\
  && Q X_{12} + X_{12} \, Q^\dagger = \frac{1}{2} \, Q \, a a^\dagger \, \Lambda^\dagger  \, , \qquad
     Q X_{21} + X_{21} \, Q^\dagger = \frac{\imag}{2} \, \Lambda \, a a^\dagger \, Q^\dagger   \, .
\eez
We observe that, if $X$ solves the rank one Lyapunov equation
\be
       Q X + X Q^\dagger = a a^\dagger \, ,   \label{Lyapunov_rank_one}
\ee
then $X_{11} = X + \frac{1}{4} \, Q \, X \, Q^\dagger$, 
$X_{12} = X_{21}^\dagger = \frac{1}{2} \, Q \, X \, \Lambda^\dagger$, 
$X_{22} = \Lambda \, X \, \Lambda^\dagger$, solve the preceding system. 
\hfill $\Box$
\end{remark}

\subsubsection{Single Akhmediev and Kuznetsov-Ma breathers}
\label{subsec:AKM_breather}
In the simplest case, $n=1$, we write $q$ instead of $Q$, $c$ instead of $C$, and $\lambda$ instead of $\Lambda$. 
 From (\ref{NLS_Lyapunov_2}) we obtain $\Omega_{(1)} = (2 \, \Re(q))^{-1} (|\eta_1|^2 + |\eta_2|^2)$,
which shows that $\Omega_{(1)}$ nowhere vanishes if $a \neq 0$. We obtain the following regular 
solution of the focusing NLS equation,
\be
  \mathfrak{q} =  (1 - \frac{1}{\Omega_{(1)}} \, \eta_1^\ast \, \eta_2 ) \, e^{2\imag t} 
               = \Big( 1 
    - 4 \Re(q) \, \frac{ q \, |\cosh (\vartheta)|^2 + 2 \lambda \cosh (\vartheta)^* \, \sinh (\vartheta) }
           {4 \left| \cosh (\vartheta)\right|^2 + |q \cosh (\vartheta) + 2 \lambda  \sinh (\vartheta)|^2} 
                              \Big) \, e^{2\imag t}  \, ,       \label{n=1_breather}           
\ee
with $\lambda = \frac{1}{2} \sqrt{q^2-4}$ and 
\bez
      \vartheta := \lambda \, ( x + \imag \, q \, t ) + c 
                 &=& \Re(\lambda) \, x - [ \Re(\lambda) \, \Im(q) + \Im(\lambda) \, \Re(q) ] \, t  + \Re(c) \\
                 &&  + \imag \, \Big( \Im(\lambda) \, x + [ \Re(\lambda) \, \Re(q) - \Im(\lambda) \, \Im(q) ] \, t + \Im(c) \Big) \, .
\eez
As is evident from (\ref{NLS_Lyapunov}) and (\ref{NLS_phi_sol}), the constant $a$ drops out.

If $q \neq 0$ is real and $|q|>2$, then $\lambda$ is real and $\vartheta = \lambda \, [ x-x_0 + \imag \, q \, (t-t_0) ]$, where 
$x_0 = -\Re(c)/\lambda$ and $t_0 = - \Im(c)/(\lambda q)$.
The solution then becomes the \emph{Kuznetsov-Ma breather} \cite{Kuzn77,Ma79}, 
\bez
  \mathfrak{q} = - \Big(   1 + 2 \lambda \, \frac{ 2 \lambda \, \cos(\vartheta_1) + \imag \, q \, \sin(\vartheta_1)}
        {2 \, \cos(\vartheta_1) + 2 (1+\lambda^2) \, \cosh(\vartheta_2) + \lambda q \, \sinh(\vartheta_2) } 
            \Big) \, e^{2\imag t}   \, ,
\eez
where $\vartheta_1 = 2 \lambda q \, (t-t_0)$ and 
$\vartheta_2 = 2 \lambda \, (x-x_0)$. See Fig.~\ref{fig:n=1_breathers} for a plot.

\begin{figure} 
\begin{center}
\includegraphics[scale=.5]{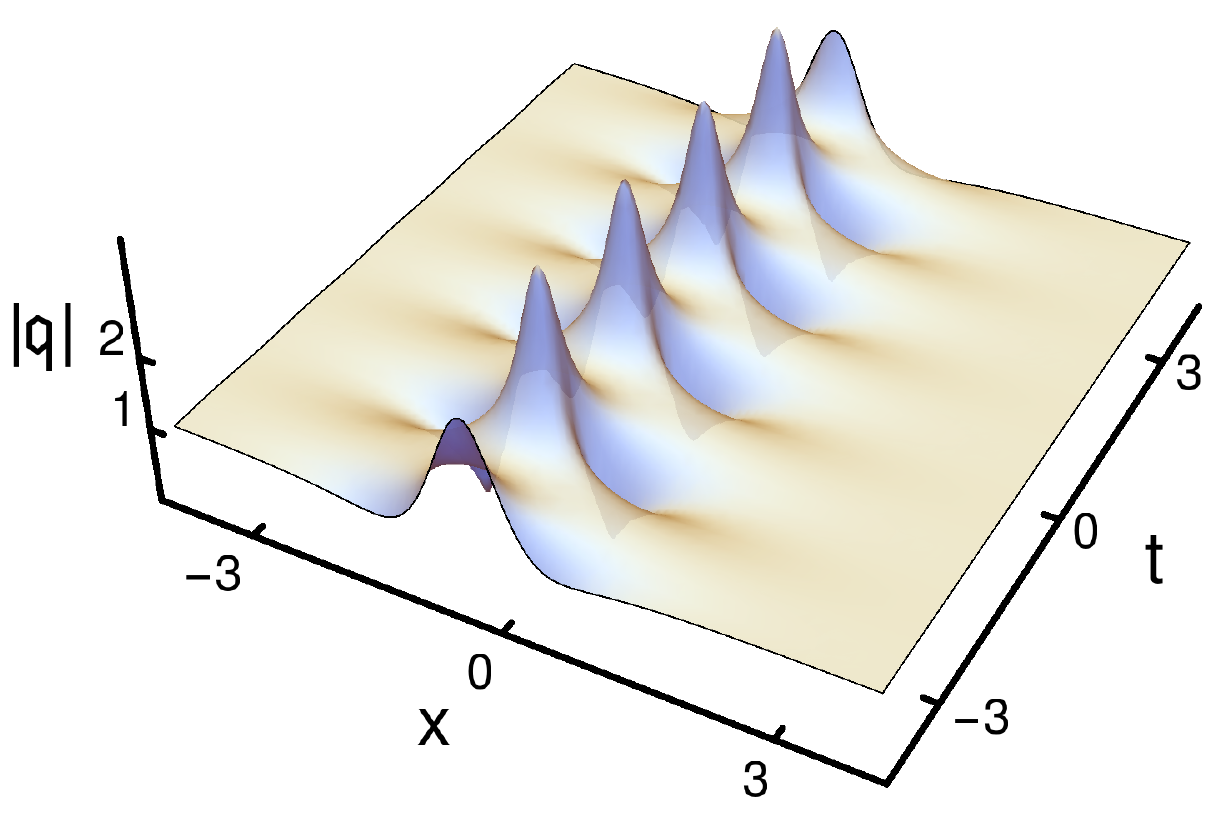} 
\hspace{.5cm}
\includegraphics[scale=.5]{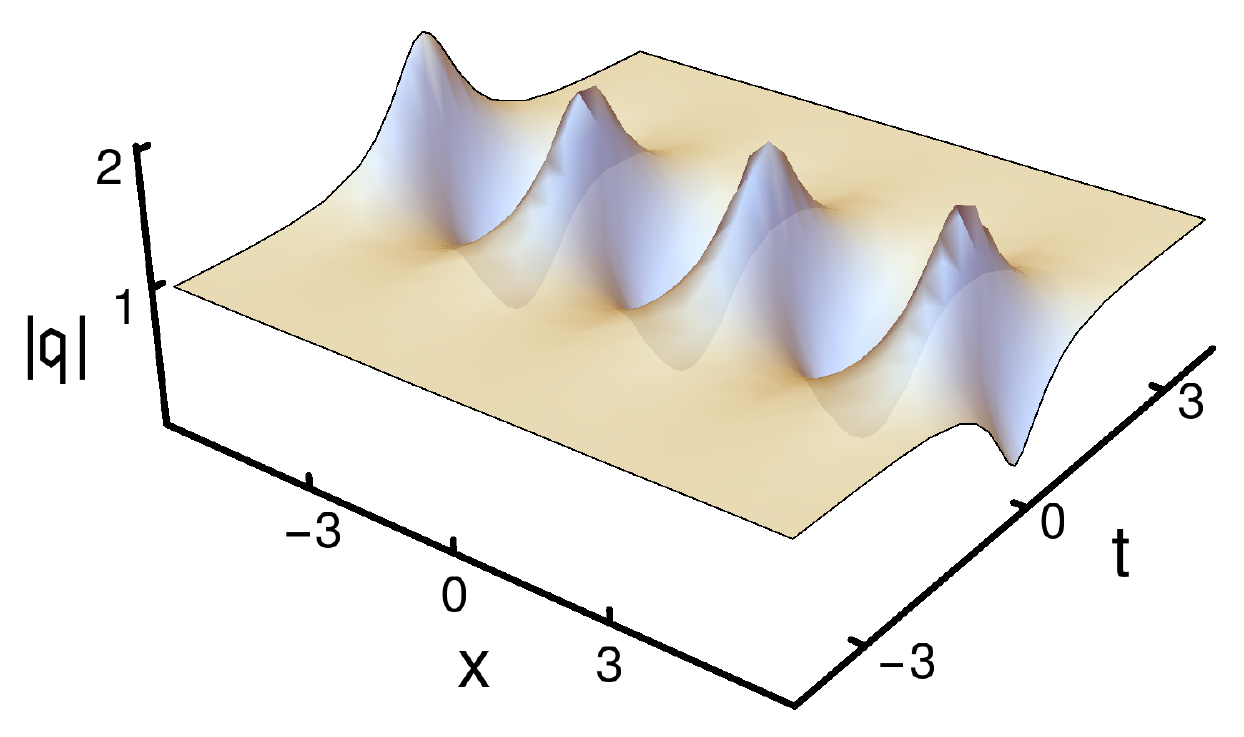} 
\end{center}
\caption{Plot of the modulus of the Kuznetsov-Ma breather (left plot, with $q=5/2$ and $c=0$) and the Akhmediev breather (right plot, 
with $q=1$ and $c=0$) in Section~\ref{subsec:AKM_breather}. }
\label{fig:n=1_breathers} 
\end{figure}

If $q \neq 0$ is real and $|q| < 2$, then $\lambda$ is imaginary. In this case we write 
$\lambda = \imag \, \tilde{\lambda}$ with real $\tilde{\lambda}$, so that $q^2 = 4 (1-\tilde{\lambda}^2)$.
Then the above solution becomes the \emph{Akhmediev breather}\footnote{Interpreting $t$ as time, this is actually 
a rogue wave which suddenly appears and then disappears. Changing the roles of $x$ and $t$ as spatial and temporal coordinates, 
it becomes a breather.}
\cite{AEK87}, 
\bez
    \mathfrak{q} = \Big( 4 \tilde{\lambda} \, \frac{ 2 \tilde{\lambda} \, \cosh(\tilde{\vartheta}_1)
                   + \imag  \, q \, \sinh(\tilde{\vartheta}_1)}{4 \cosh(\tilde{\vartheta}_1) 
                   + q^2 \cos(\tilde{\vartheta}_2) - 2 \tilde{\lambda} q \, \sin(\tilde{\vartheta}_2)} - 1 \Big) 
                   \, e^{2\imag t}   \, ,   
\eez
with $\tilde{\vartheta}_1 = 2 \tilde{\lambda} q \, (t - t_0 )$, $t_0 = \Re(c)/(\tilde{\lambda} q)$, and 
$\tilde{\vartheta}_2 = 2 \tilde{\lambda} \, (x-x_0)$, $x_0 = - \Im(c)/\tilde{\lambda}$. 
See Fig.~\ref{fig:n=1_breathers} for a plot.

Both, the Akhmediev and the Kuznetsov-Ma breathers, are thus special cases of the solution (\ref{n=1_breather}). The latter also 
contains analogs of the Kuznetsov-Ma breather with non-zero constant velocity, see Fig.~\ref{fig:moving_n=1_breather} and 
also \cite{Bion+Kova14}. 
These solutions are \emph{not} obtained via a Galilean transformation from those with real $q$ (cf. \cite{ASA09PRA}, 
for example). Generalizations of above solutions involving Jacobi elliptic functions appeared in \cite{AEK87}.

\begin{figure} 
\begin{center}
\includegraphics[scale=.5]{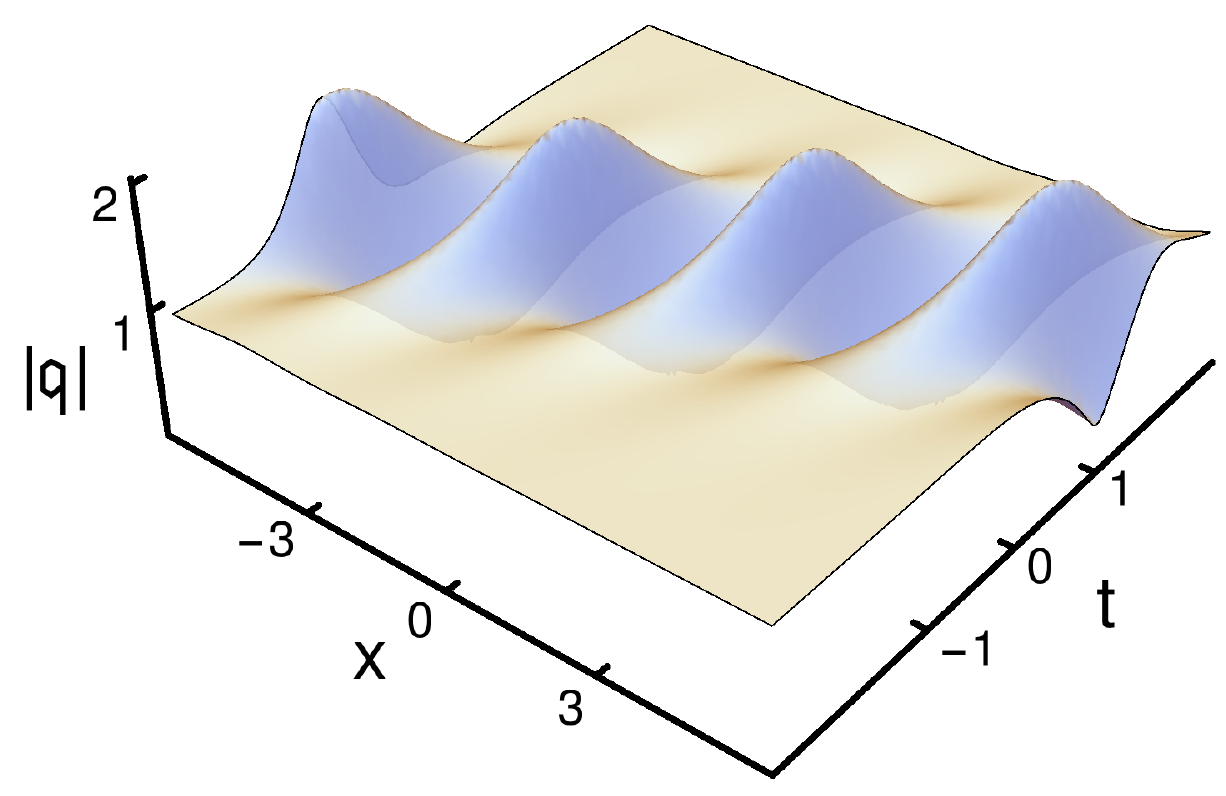}
\end{center}
\caption{Plot of the modulus of the solution (\ref{n=1_breather}) in Section~\ref{subsec:AKM_breather} 
with $c=0$ and $q=1+\imag$. }
\label{fig:moving_n=1_breather} 
\end{figure}
\hfill $\Box$

\subsubsection{Higher order Akhmediev-Kuznetsov-Ma breathers}
\label{subsec:higherAKM}
Let us split the $n \times n$ lower triangular Jordan block (\ref{Q_Jordan_block}) as
\bez
   Q = q \, I_n + N \, ,     
\eez
where $I_n$ denotes the $n \times n$ identity matrix and $N$ is the nilpotent matrix of degree $n$, 
having $1$'s in the second lower diagonal and otherwise zeros. 
For any analytic function $f$ of a single variable, we define 
\bez
     f(Q) := \sum_{k=0}^{n-1} \frac{1}{k!} f^{(k)}(q) \, N^k \, ,
\eez
using the Taylor series expansion for $f$. 
If $q \neq \pm 2$, then $\frac{1}{4} Q^2-I$ is invertible and has the square root (also see \cite{Higham87,High08})
\bez
    \Lambda = f(Q)  \qquad \mbox{where} \quad   f(q) = \frac{1}{2} \, \sqrt{q^2 - 4}  \, .    
\eez
This is the $n \times n$ lower triangular Toeplitz matrix
\bez
    \Lambda = \frac{1}{2} \, \sqrt{q^2 - 4} \, \left( \begin{array}{cccccc} 1 & 0 & \cdots & \cdots & \cdots & 0 \\ 
                                      q \, (q^2 - 4)^{-1}  &  1  & \ddots & \ddots & \ddots & \vdots \\
                                     - 2 \, (q^2 - 4)^{-2} & \ddots  &  \ddots & \ddots & \ddots & \vdots \\
                              2 \, q \, (q^2-4)^{-3} & \ddots & \ddots  &  \ddots &  \ddots & \vdots \\
                                 \vdots & \ddots & \ddots & \ddots & \ddots & 0 \\
                                 \vdots &  & \ddots & \ddots & \ddots & 1
                     \end{array} \right) \, .   
\eez

\begin{proposition}
\label{prop:regularity}
Let $Q$ be given by (\ref{Q_Jordan_block}) with $\Re(q) \neq 0$ and $q \neq \pm 2$.   
Then the solution (\ref{NLS_q_sol}) of the focusing NLS equation, obtained from the solution 
of the linear system determined by (\ref{eta_sol}), is regular if the first component of the vector $a$ 
in (\ref{eta_sol}) is different from zero.  
\end{proposition}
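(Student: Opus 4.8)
The plan is to prove that the solution (\ref{NLS_q_sol}) is regular by showing that the Hermitian solution $\Omega = \Omega_{(n)}$ of the rank-two Lyapunov equation (\ref{NLS_Lyapunov_2}) is \emph{invertible for every real} $x,t$. First I would note that $\Re(q) \neq 0$ is precisely the spectrum condition (\ref{spec}), so $\Omega$ is the unique, Hermitian solution. For $\Re(q) > 0$ it admits the Gramian representation
\bez
   \Omega = \int_0^\infty e^{-Qs} \, \big(\eta_1\eta_1^\dagger + \eta_2\eta_2^\dagger\big) \, e^{-Q^\dagger s} \, \d s \, ,
\eez
so $\Omega \geq 0$; for $\Re(q) < 0$ the same integral with $e^{+Qs}$ and an overall minus sign gives $\Omega \leq 0$. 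In either case $\Omega$ is semidefinite, and from $v^\dagger \Omega v = \pm\int_0^\infty \| \eta^\dagger e^{\mp Q^\dagger s} v \|^2 \, \d s$ it is \emph{definite}, hence invertible, unless there is a nonzero $v$ (possibly depending on $x,t$) with $\eta^\dagger e^{\mp Q^\dagger s} v \equiv 0$, equivalently $v^\dagger Q^k \eta_j = 0$ for all $k \geq 0$ and $j=1,2$. Thus regularity reduces to the controllability-type statement that $V := \mathrm{span}\{ Q^k \eta_j : k \geq 0,\ j=1,2 \} = \bbC^n$ at every point.

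Next I would bring in the hypothesis $a_1 \neq 0$. Writing $Q = q \, I + N$ with $N$ the nilpotent shift, the vectors $a, N a, \ldots, N^{n-1} a$ form a lower triangular array with constant diagonal $a_1$, so $a$ is a cyclic vector of $Q$ and $\mathrm{span}\{ Q^k a \} = \bbC^n$. Since $\Lambda$, $\cosh(\Theta)$ and $\sinh(\Theta)$ are lower triangular Toeplitz matrices (functions of $Q$), they commute with $Q$, and from (\ref{eta_sol}), using $\eta_1 = \cosh(\Theta)\,a$ and $\eta_2 = \tfrac12 Q \eta_1 + \Lambda\sinh(\Theta)\,a$ and absorbing the term $\tfrac12 Q^{k+1}\eta_1$ into the $\eta_1$-part, the subspace collapses to
\bez
   V = \mathrm{range}\big(\cosh(\Theta)\big) + \mathrm{range}\big(\Lambda \sinh(\Theta)\big) \, .
\eez

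To finish, I would use that a lower triangular Toeplitz matrix is invertible iff its diagonal entry is nonzero. The diagonal entries of $\cosh(\Theta)$ and $\sinh(\Theta)$ are $\cosh(\vartheta)$ and $\sinh(\vartheta)$, where $\vartheta$ is the common eigenvalue of $\Theta$, and these cannot vanish simultaneously because $\cosh^2(\vartheta) - \sinh^2(\vartheta) = 1$. If $\cosh(\vartheta) \neq 0$, then $\cosh(\Theta)$ is invertible and $\mathrm{range}(\cosh(\Theta)) = \bbC^n$. Otherwise $\sinh(\vartheta) \neq 0$, so $\sinh(\Theta)$ is invertible; since $q \neq \pm 2$ makes the diagonal $\tfrac12\sqrt{q^2-4}$ of $\Lambda$ nonzero, $\Lambda$ is invertible too, whence $\mathrm{range}(\Lambda\sinh(\Theta)) = \bbC^n$. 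In both cases $V = \bbC^n$, so $\Omega$ is definite and invertible for all real $x,t$, and (\ref{NLS_q_sol}) is regular.

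The step I expect to be most delicate is the first: converting ``invertible'' into the clean condition $V = \bbC^n$ through the Lyapunov/observability Gramian, uniformly in the sign of $\Re(q)$. Once that reduction is secured, the remainder is linear algebra in the commutative algebra of lower triangular Toeplitz matrices, where cyclicity of $a$ and the identity $\cosh^2 - \sinh^2 = 1$ carry the argument. A more computational alternative would invoke the explicit rank-one solution of Proposition~\ref{prop:rank_one} together with the Schur-complement recursion (\ref{Omega_inverse}) to track $\det(\Omega_{(n)})$ directly, but the Gramian route avoids any closed-form determinant.
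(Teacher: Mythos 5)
Your proof is correct, and it takes a genuinely different route from the paper's. The paper disposes of the key step---invertibility of the solution of the Lyapunov equation (\ref{NLS_Lyapunov_2}) for all $(x,t)$---by appealing to Proposition~\ref{prop:Lyapunov_gen} in Appendix~B, whose proof rests on the explicit closed-form solution (\ref{prop_X}) of the rank-one Lyapunov equation (Proposition~\ref{prop:rank_one}): positive definiteness of the symmetric Pascal matrix makes the rank-one piece coming from a vector with nonvanishing first component positive definite, and adding the remaining positive semidefinite pieces preserves invertibility. What the paper then verifies in the main text is exactly that the first components of $\eta_1$ and $\eta_2$ cannot vanish simultaneously, via the quadratic identity for matrix $\cosh$ and $\sinh$---the same dichotomy ($\cosh(\vartheta)\neq 0$, or else $\sinh(\vartheta)\neq 0$ and $\Lambda$ invertible because $q\neq\pm 2$) that closes your argument. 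Where you genuinely diverge is in the mechanism converting this nonvanishing into invertibility of $\Omega$: you bypass Appendices~A and~B entirely, representing the unique solution as an (anti-)stability Gramian and reducing invertibility to the Kalman-type controllability condition $\mathrm{span}\{Q^k\eta_j\}=\bbC^n$, which you then verify using cyclicity of $a$ (from $a_1\neq 0$) and commutativity of the lower triangular Toeplitz algebra. This is precisely the control-theoretic route that the paper acknowledges but does not take; see the footnote in Appendix~A citing \cite{Hearon77}, which states that $a_1\neq 0$ makes $(Q,a)$ controllable and hence the Lyapunov solution invertible. Your version buys economy and generality: no closed-form solution is needed, the sign of $\Re(q)$ is handled symmetrically, and the argument extends verbatim to any finite collection of vectors $\eta_j$ and to more general non-derogatory $Q$ satisfying the spectrum condition. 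The paper's version, by contrast, reuses the explicit Pascal-matrix formula it develops anyway to construct the breathers, and yields sharper quantitative output (definiteness of $\pm\Omega$ and the determinant formula (\ref{detX_m=1})) rather than bare invertibility. One point worth making explicit in your write-up: the assertion that $C$, and hence $\Theta$, $\cosh(\Theta)$, $\sinh(\Theta)$, are lower triangular Toeplitz is justified because $Q$ is a single Jordan block, hence non-derogatory, so any matrix commuting with $Q$ (or with $\Lambda$) is a polynomial in $Q$.
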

\begin{proof}
$q \neq \pm 2$ is the condition for $\frac{1}{4} Q^2 -I$ to be invertible.
As a consequence of the quadratic identity for $\sinh$ and $\cosh$ (respectively, $\sin$ and $\cos$), which is also satisfied 
by their matrix versions \cite{High08}, $\eta_1$ and $\eta_2$ given by (\ref{eta_sol}) cannot vanish simultaneously at any 
space-time point $(x,t)$ if $a_1 \neq 0$. Now it follows from Proposition~\ref{prop:Lyapunov_gen} in Appendix~B that 
the solution of the Lyapunov equation (\ref{NLS_Lyapunov_2}) is invertible 
for all $x \in \bbR$ and $t \in \bbR$. But this is the regularity condition for (\ref{NLS_q_sol}). 
\end{proof}

We define the \emph{$n$-th order AKM (Akhmediev-Kuznetsov-Ma) breather} as the solution 
of the focusing NLS equation, obtained via Proposition~\ref{prop:NLS} 
with the $n \times n$ matrix $Q$ given by the Jordan block (\ref{Q_Jordan_block}), where 
$\Re(q) \neq 0$ and $q \neq \pm 2$, 
and with the solution (\ref{eta_sol}) of the linear system, where the first component of the vector $a$ 
is different from zero. According to Proposition~\ref{prop:regularity}, the solution is regular.
The first component of $a$ can be set to $1$ since it does not influence the solution determined 
by (\ref{NLS_q_sol}). Writing $a = A \, ( 1 , 0, \ldots, 0 )^\intercal$ (with the Toeplitz matrix $A$ 
in (\ref{prop_A,D})), Remark~\ref{rem:invariance} shows that the parameters $a_i$, $i>1$, are redundant. 
Hence, without restriction of generality we can set 
\bez
        a = (1,0,\ldots,0) \, .
\eez
The $n$-th order AKM breather thus depends on $n$ complex parameters. 
The parameter $c_1$ can be removed by shifts in $x$ and $t$, which are Lie point symmetries of the NLS equation.

\begin{example}
\label{ex:n=2breather}
 For $n=2$ we have
\bez
  &&  Q = \left( \begin{array}{cc} q & 0 \\ 
                1 & q  
                \end{array} \right) \, ,    \qquad
    \Lambda = \left( \begin{array}{cc} \lambda & 0 \\ 
                \frac{q}{4 \lambda} & \lambda 
                \end{array} \right) \, ,    \qquad
    \Theta = \left( \begin{array}{cc} \vartheta & 0 \\ 
                \frac{1}{4 \lambda} \mathcal{P} & \vartheta 
                \end{array} \right) \, ,    \\
  &&  \eta_1 = \left( \begin{array}{c} \cosh(\vartheta) \\ 
                \frac{1}{4 \lambda} \mathcal{P} \, \sinh(\vartheta) 
                \end{array} \right) \, , \qquad
     \eta_2 = \left( \begin{array}{c} \theta_1 \\ 
                \frac{\mathcal{P}+2}{4 \lambda} \, \theta_2  
                \end{array} \right)  \, ,
\eez
where $\lambda = \frac{1}{2} \, \sqrt{q^2 - 4}$,  
\bez
  &&  \vartheta = \lambda \, (x + \imag \, q \, t) + c_1 \, , \qquad
    \mathcal{P} = q \, x + 2 \imag \, (q^2 - 2) \, t + 4 \lambda \, c_2 \, , \\
  && \theta_1 = \frac{q}{2} \cosh(\vartheta) + \lambda \, \sinh(\vartheta) \, , \qquad
     \theta_2 = \lambda \, \cosh(\vartheta) + \frac{q}{2} \, \sinh(\vartheta) \, ,
\eez
and $c_1,c_2$ are arbitrary complex constants. 
The determinant of $\Omega_{(2)}$ is
\bez
   \det(\Omega_{(2)}) &=& \frac{1}{(2 \, \Re(q))^4} \left( |\cosh(\vartheta)|^2 + |\theta_1|^2 \right)^2
    + \frac{1}{4 \, \Re(q)^2} \Big| \frac{\mathcal{P}+2}{4 \lambda} \, \cosh(\vartheta) \, \theta_2
        - \frac{\mathcal{P}}{4 \lambda} \, \sinh(\vartheta) \, \theta_1 \Big|^2   \\
      &=& \frac{1}{(2 \, \Re(q))^4} \left( | \cosh(\vartheta) |^2 
              + | \frac{q}{2} \cosh(\vartheta) + \lambda \, \sinh(\vartheta) |^2 \right)^2 \\
      && + \frac{1}{16 \, \Re(q)^2} \Big| \mathcal{P} + 2 \, \cosh(\vartheta)^2  
                + \frac{q}{\lambda} \, \cosh(\vartheta) \, \sinh(\vartheta) \Big|^2 \, .
\eez
The $n=2$ AKM solution of the focusing NLS equation is then given by (\ref{n=2sol}) with
\bez
     F 
    &=& \frac{1}{4 \, \Re(q)^3}  
         \Big[ \left( |\cosh(\vartheta)|^2 + |\frac{q}{2} \cosh(\vartheta) + \lambda \, \sinh(\vartheta)|^2 \right) 
                 \, \cosh(\vartheta) \, (\frac{q}{2} \cosh(\vartheta) + \lambda \, \sinh(\vartheta) )  \\
     && + \frac{\imag}{2} \, \Re(q) \, \Im\Big( \mathcal{P} + 2 \, \cosh(\vartheta)^2  
                + \frac{q}{\lambda} \, \cosh(\vartheta) \, \sinh(\vartheta) \Big) \Big] \, .
\eez
For real $q$ we obtain a second order Kuznetsov-Ma breather if $\lambda$ is real, and a 
second order Akhmediev breather if $\lambda$ is imaginary. 
This solution first appeared in \cite{AEK85}. 
See Fig.~\ref{fig:n=2_breathers} for corresponding plots.
\begin{figure} 
\begin{center}
\includegraphics[scale=.5]{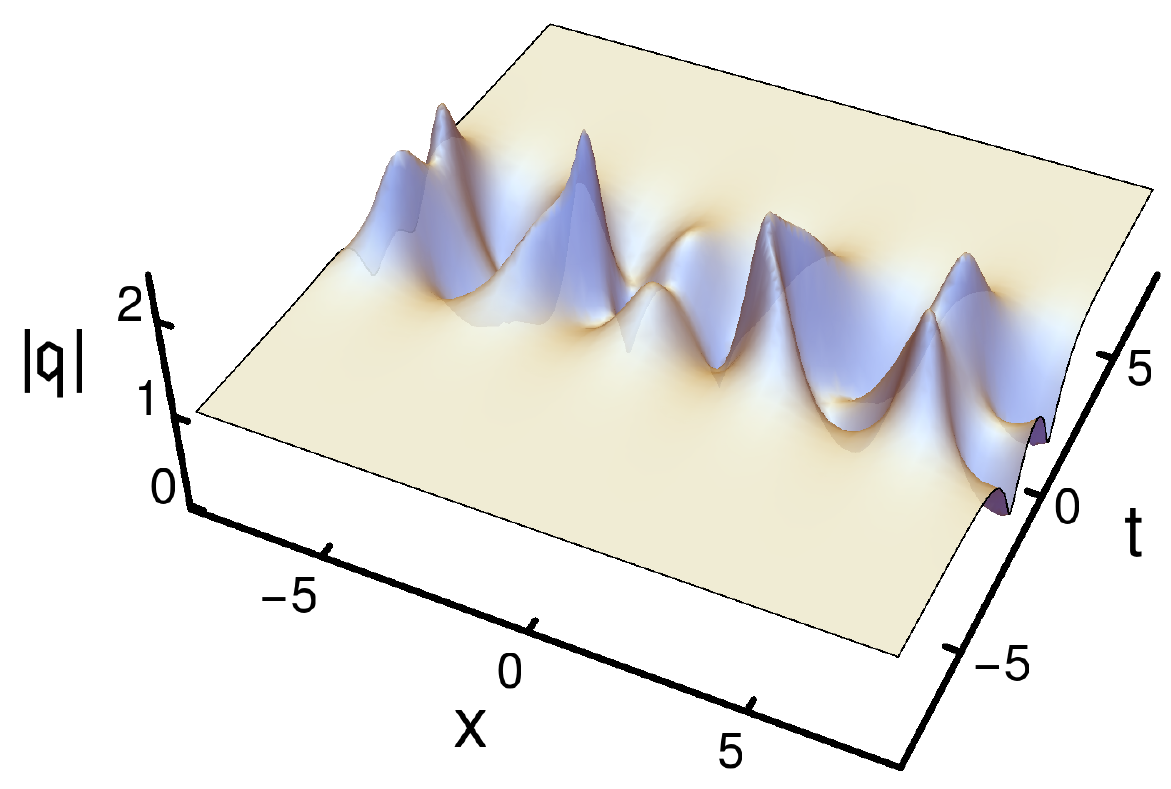}
\hspace{.5cm}
\includegraphics[scale=.5]{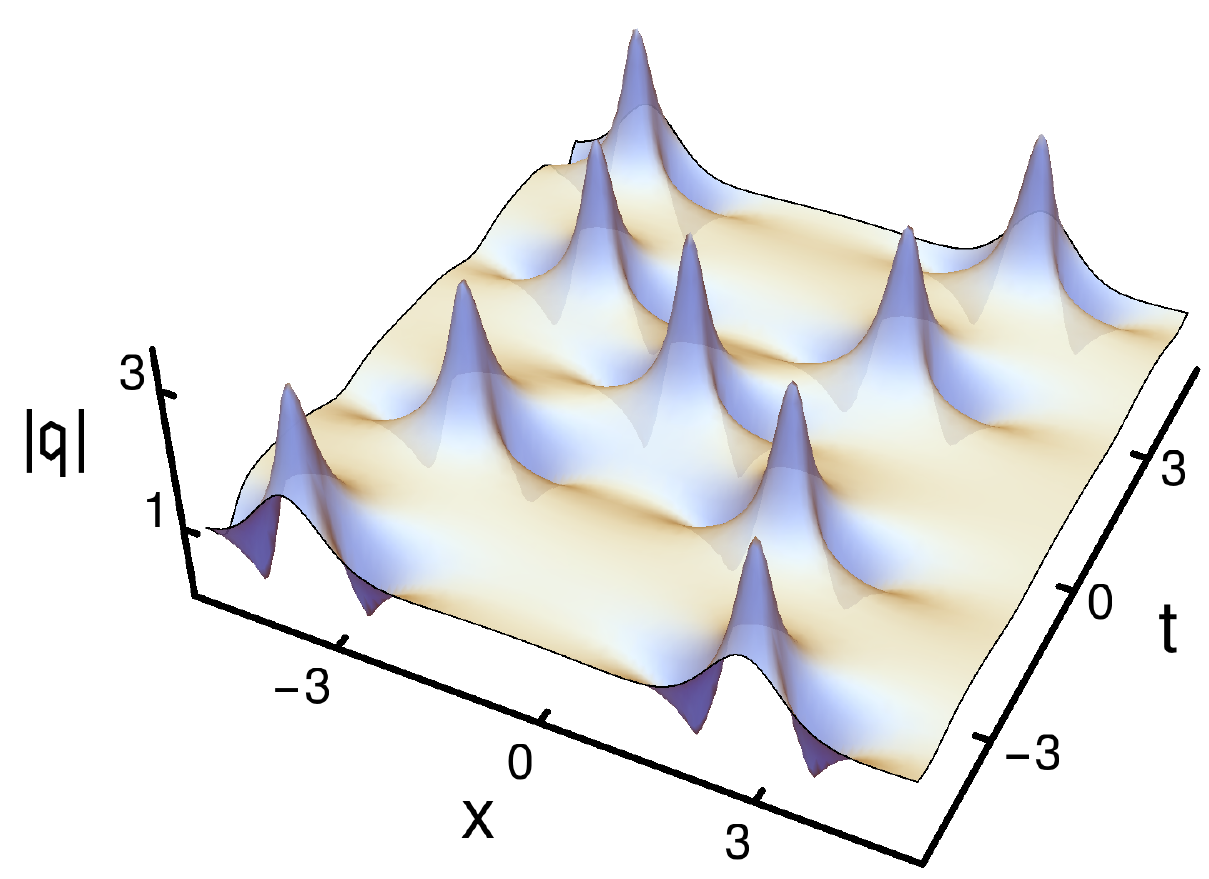}
\end{center}
\caption{Plots of the modulus of the second order ($n=2$) AKM solution in Example~\ref{ex:n=2breather},
with $q=1$ (a second order Akhmediev breather), 
respectively $q=9/4$ (a second order Kuznetsov-Ma breather). Here we set $c_1 = c_2 =0$. 
}
\label{fig:n=2_breathers} 
\end{figure}
\hfill $\Box$
\end{example}

\begin{example}
For $n=3$ we have
\bez
  &&  Q = \left( \begin{array}{ccc} q & 0 & 0 \\ 
                                    1 & q & 0 \\
                                    0 & 1 & q
                \end{array} \right) \, ,    \qquad
    \Lambda = \left( \begin{array}{ccc} \lambda & 0 & 0 \\ 
                \frac{q}{4 \lambda} & \lambda & 0 \\
                - \frac{1}{8 \lambda^3} & \frac{q}{4 \lambda} & \lambda
                \end{array} \right) \, ,    \qquad
    \Theta = \left( \begin{array}{ccc} \vartheta & 0 & 0 \\ 
                \frac{1}{4 \lambda} \mathcal{P}_1 & \vartheta & 0 \\
                - \frac{1}{8 \lambda^3} \mathcal{P}_2 & \frac{1}{4 \lambda} \mathcal{P}_1 & \vartheta
                \end{array} \right) \, ,    \\
  &&  \eta_1 = \left( \begin{array}{c} \cosh(\vartheta) \\ 
                \frac{1}{4 \lambda} \mathcal{P}_1 \, \sinh(\vartheta)  \\
                \frac{1}{32 \lambda^3} ( \lambda \, \mathcal{P}_1^2 \, \cosh(\vartheta) 
                   - 4 \mathcal{P}_2 \, \sinh(\vartheta) ) 
                \end{array} \right) \, , \quad
     \eta_2 = \left( \begin{array}{c} \theta_1 \\ 
                \frac{1}{4 \lambda} (\mathcal{P}_1+2) \, \theta_2  \\
                \frac{1}{64 \lambda^3} \, \theta_3
                \end{array} \right)  \, ,
\eez
where $\vartheta, \theta_1, \theta_2$ are defined as in Example~\ref{ex:n=2breather} 
with $\mathcal{P}_1 = \mathcal{P}$, and
\bez
  && \mathcal{P}_2 = x - \imag \, (2 \lambda^2 - 1 ) \, q \, t - 8 \, \lambda^3 c_3  \, , \\
  && \theta_3 = 2 \lambda \, \mathcal{P}_1 (\mathcal{P}_1 + 4 ) \, \theta_1 
     - 8 \, \mathcal{P}_2 \, \theta_2 - 8 \, \sinh (\vartheta)  \, .
\eez
Inserting (\ref{invOm_n=2}) and the expressions in Example~\ref{ex:n=3Jordan_Lyapunov_sol} in (\ref{Omega_inverse}), 
via (\ref{NLS_q_sol}) we can compute the third order AKM breather. This results in a very lengthy expression and 
will therefore not be written explicitly. 
\hfill $\Box$
\end{example}

\subsection{The family of (higher) Peregrine breathers}
\label{subsec:Peregrine_breather}

\begin{example}
\label{ex:Peregrine_breather}
Let $n=1$ and $q=2$. Then the solution of the linear system (\ref{eta_lin_sys}) 
is given by $\eta_1 = a \, [ x - x_0 + 2 \imag \, (t-t_0)]$ and $\eta_2 = \eta_1 + a$, 
with real constants $x_0,t_0$, and a complex constant $a$. The 
corresponding solution of (\ref{NLS_Lyapunov_2}) is
\bez
    \Omega_{(1)} = \frac{1}{4} \left( |\eta_1|^2 + |\eta_2|^2 \right)
           = \frac{1}{8} |a|^2 \Big( 16 \, (t-t_0)^2 + 4 (x-x_0 + \frac{1}{2})^2 + 1 \Big) \, ,
\eez
which nowhere vanishes (if $a \neq 0$), and from (\ref{NLS_phi_sol}) we obtain the \emph{Peregrine breather} 
\cite{Pere83}
\bez
    \mathfrak{q} 
    = e^{2 \imag t} \left( 1 - \frac{4 (1 + 4 \imag (t-t_0)}{16 \, (t-t_0)^2 
          + 4 (x-x_0)^2 + 1} \right) \, ,
\eez
after a redefinition of $x_0$ and $\mathfrak{q} \mapsto -\mathfrak{q}$. 
\hfill  $\Box$
\end{example}

Now we address the case where the matrix 
\bez
     \mathcal{N} := \frac{1}{4} \, Q^2 - I
\eez
in the linear system (\ref{eta_lin_sys}) is degenerate. More precisely, we will concentrate 
on the case where $\mathcal{N}$ is nilpotent, so that zero is the only eigenvalue, which then means 
that the only eigenvalue of $Q$ is $q=2$. 

\begin{proposition}
\label{prop:linsys_P}
Let $n>0$. Let the $n \times n$ matrix $Q$ be invertible and such that $\mathcal{N} := \frac{1}{4} Q^2 -I$ is nilpotent of degree $n$, 
i.e., $\mathcal{N}^n = 0$. 
Then the linear system (\ref{eta_lin_sys}) is solved by
\bez
   \eta_1 &=& \left( Q^{-1} R_2(\mathcal{N},x) \, R_1(-Q^2 \mathcal{N},t)
              + \imag \, R_1(\mathcal{N},x) \, R_2(-Q^2 \mathcal{N},t) \right) \, a \\
  && + \left( R_1(\mathcal{N},x) \, R_1(-Q^2 \mathcal{N},t) + \imag \, Q \mathcal{N} 
            \, R_2(\mathcal{N},x) \, R_2(-Q^2 \mathcal{N},t) \right) \, b \, , \\
   \eta_2 &=& \frac{1}{2} Q \, \eta_1 + \eta_{1,x} \, ,
\eez
where $a$ and $b$ are $n$-component constant vectors, and 
\bez
     R_1(\mathcal{N},x) := \sum_{k=0}^{n-1} \frac{x^{2k}}{(2k)!} \, \mathcal{N}^k \, , \qquad
     R_2(\mathcal{N},x) := \sum_{k=0}^{n-1} \frac{x^{2k+1}}{(2k+1)!} \, \mathcal{N}^k \, .
\eez
We note that $R_1(\mathcal{N},x)_x = \mathcal{N} \, R_2(\mathcal{N},x)$ and $R_2(\mathcal{N},x)_x = R_1(\mathcal{N},x)$.
\end{proposition}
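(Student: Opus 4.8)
The plan is to construct the general solution of the first two equations of (\ref{eta_lin_sys}) explicitly, and then to fix the integration constants so as to recover the stated closed form. Throughout I would exploit that $\mathcal{N} = \tfrac14 Q^2 - I$ is a polynomial in $Q$, so that $Q$, $\mathcal{N}$, $R_1(\mathcal{N},x)$, $R_2(\mathcal{N},x)$ and their $t$-analogues all commute, and that $\mathcal{N}^n = 0$ makes the finite sums defining $R_1,R_2$ well defined. The two differentiation rules recorded in the statement, $\partial_x R_1(\mathcal{N},x) = \mathcal{N} R_2(\mathcal{N},x)$ and $\partial_x R_2(\mathcal{N},x) = R_1(\mathcal{N},x)$, follow by term-by-term differentiation of the defining series; together they give $\partial_x^2 R_1 = \mathcal{N} R_1$ and $\partial_x^2 R_2 = \mathcal{N} R_2$, so both $R_1$ and $R_2$ solve the matrix ODE $Y_{xx} = \mathcal{N} Y$.

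First I would solve the spatial equation $\eta_{1,xx} = \mathcal{N}\eta_1$. Since $R_1(\mathcal{N},0) = I$, $R_2(\mathcal{N},0) = 0$, while $\partial_x R_1|_{x=0} = 0$ and $\partial_x R_2|_{x=0} = I$, the matrix functions $R_1(\mathcal{N},x)$ and $R_2(\mathcal{N},x)$ form a fundamental system, so every solution is uniquely
\[
   \eta_1(x,t) = R_1(\mathcal{N},x)\,\alpha(t) + R_2(\mathcal{N},x)\,\beta(t),
\]
with $\alpha(t) = \eta_1(0,t)$ and $\beta(t) = \eta_{1,x}(0,t)$ two $t$-dependent $n$-vectors. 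This disposes of the first equation, and the remaining equation $\imag\,\eta_{1,t} + Q\,\eta_{1,x} = 0$ becomes an ODE system for $\alpha,\beta$.

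To extract it cleanly I would set $F := \imag\,\eta_{1,t} + Q\,\eta_{1,x}$ and observe, using $\eta_{1,xx} = \mathcal{N}\eta_1$ and the commutativity of the constant matrices with $\partial_x,\partial_t$, that $F_{xx} = \mathcal{N} F$. Hence $F$ itself solves the spatial ODE and is determined by $F(0,t)$ and $F_x(0,t)$; evaluating gives $F(0,t) = \imag\,\dot\alpha + Q\beta$ and $F_x(0,t) = \imag\,\dot\beta + Q\mathcal{N}\alpha$ (from $\eta_{1,x}(0,t)=\beta$ and $\eta_{1,xx}(0,t) = \mathcal{N}\alpha$). Therefore $F\equiv 0$ is equivalent to $\dot\alpha = \imag\,Q\beta$, $\dot\beta = \imag\,Q\mathcal{N}\alpha$. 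One differentiation decouples this to $\ddot\alpha = -Q^2\mathcal{N}\,\alpha$ and $\ddot\beta = -Q^2\mathcal{N}\,\beta$; writing $\mathcal{M} := -Q^2\mathcal{N}$, which is nilpotent of degree $n$ since $\mathcal{M}^n = (-1)^n Q^{2n}\mathcal{N}^n = 0$, the same fundamental-system argument in $t$ (with $\mathcal{N}$ replaced by $\mathcal{M}$) yields $\alpha(t) = R_1(\mathcal{M},t)\alpha(0) + R_2(\mathcal{M},t)\dot\alpha(0)$ and likewise for $\beta$, with $\dot\alpha(0) = \imag Q\beta(0)$ and $\dot\beta(0) = \imag Q\mathcal{N}\alpha(0)$.

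Finally I would fix the constants. Using invertibility of $Q$, set $\alpha(0) = b$ and $\beta(0) = Q^{-1}a$; then $\dot\alpha(0) = \imag\,a$ and $\dot\beta(0) = \imag Q\mathcal{N}\,b$, so $\alpha(t) = R_1(\mathcal{M},t)b + \imag R_2(\mathcal{M},t)a$ and $\beta(t) = R_1(\mathcal{M},t)Q^{-1}a + \imag R_2(\mathcal{M},t)Q\mathcal{N}b$. Substituting into the representation of $\eta_1$, collecting the $a$- and $b$-terms and freely commuting $Q^{-1}$ and $Q\mathcal{N}$ past the $R_i$, reproduces exactly the two bracketed coefficients in the stated formula, while $\eta_2 = \tfrac12 Q\eta_1 + \eta_{1,x}$ is just its definition. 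The only genuinely substantive step is the reduction in the previous paragraph: recognising that $F$ solves the same spatial ODE, so that vanishing of its value and first $x$-derivative at $x=0$ forces $F\equiv 0$. The rest is bookkeeping, and the one place to stay careful is keeping the placement of $Q^{-1}$ and $Q\mathcal{N}$ correct when matching the claimed expression, which is legitimate precisely because everything in sight is a polynomial in $Q$.
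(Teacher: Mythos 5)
Your proposal is correct and is essentially the paper's own proof: both solve $\eta_{1,xx}=\mathcal{N}\,\eta_1$ via the fundamental system $R_1(\mathcal{N},x),R_2(\mathcal{N},x)$ with $t$-dependent coefficient vectors, reduce $\imag\,\eta_{1,t}+Q\,\eta_{1,x}=0$ to ODEs for those vectors (the paper's $g_2=-\imag\,Q^{-1}g_{1,t}$ and $g_{1,tt}+Q^2\mathcal{N}\,g_1=0$ are exactly your coupled system for $\alpha,\beta$), and solve these in $t$ by the same $R_1,R_2$ construction applied to $-Q^2\mathcal{N}$. Your auxiliary-function device (showing $F:=\imag\,\eta_{1,t}+Q\,\eta_{1,x}$ solves the same spatial ODE, so vanishing of $F$ and $F_x$ at $x=0$ forces $F\equiv 0$) is just a tidier justification of the splitting step the paper performs when substituting the representation into the $t$-equation.
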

\begin{proof}
Expressing the first of equations (\ref{eta_lin_sys}) as a first order system, we obtain the solution
\bez
  \left( \begin{array}{c} \eta_1 \\ \eta_{1,x} \end{array} \right) 
       = \exp( S \, x ) \, \left( \begin{array}{c} g_1 \\ g_2 \end{array} \right) 
       = \left( \cosh( S \, x ) + \sinh( S \, x ) \right) \left( \begin{array}{c} g_1 \\ g_2 \end{array} \right)
            \, , \qquad
       S := \left( \begin{array}{cc} 0 & I \\ \mathcal{N} & 0 \end{array} \right) \, ,  
\eez
where $g_i$, $i=1,2$, are $n$-component vectors, only dependent on $t$. 
Since $\mathcal{N}$ is assumed to be nilpotent of degree $n$, using the Taylor series expansions of $\cosh$ and $\sinh$, 
this yields
\bez
     \eta_1 = R_1(\mathcal{N},x) \, g_1 + R_2(\mathcal{N},x) \, g_2 \, .  
\eez 
The second of (\ref{eta_lin_sys}) now leads to 
\bez
     g_{1,tt} + Q^2 \mathcal{N} \, g_1 = 0 \, , \qquad g_2 = - \imag \, Q^{-1} g_{1,t} \, .
\eez
The first equation is solved in the same way as we solved the first of equations (\ref{eta_lin_sys}): 
\bez
     g_1 = R_1(-Q^2 \mathcal{N},t) \, b + R_2(-Q^2 \mathcal{N},t) \, \imag \, a \, ,
\eez
with constant complex $n$-component vectors $a,b$. 
\end{proof}

Solving the Lyapunov equations (\ref{NLS_Lyapunov_3}),  
with the solutions of the linear systems given by Proposition~\ref{prop:linsys_P}, then $\mathfrak{q}$ given 
by (\ref{NLS_q_sol}) is a quasi-rational solution of the focusing NLS equation. 
The following result establishes the existence of an infinite set of \emph{regular} quasi-rational solutions.

\begin{proposition}
Any solution of the focusing NLS equation obtained via Proposition~\ref{prop:NLS}  
with a solution of the linear system given by Proposition~\ref{prop:linsys_P}, where $Q$ is the Jordan 
block (\ref{Q_Jordan_block}) with eigenvalue $q=2$, is regular if the first component of the vector 
$a$ is different from zero.
\end{proposition}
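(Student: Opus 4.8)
The plan is to follow the same two-step strategy as in the proof of Proposition~\ref{prop:regularity}: regularity of (\ref{NLS_q_sol}) is equivalent to invertibility of the Lyapunov solution $\Omega$ at every point $(x,t) \in \bbR \times \bbR$, and by Proposition~\ref{prop:Lyapunov_gen} in Appendix~B this invertibility is guaranteed as soon as the two vectors $\eta_1$ and $\eta_2$ never vanish simultaneously. So everything reduces to establishing this simultaneous non-vanishing for the polynomial solutions furnished by Proposition~\ref{prop:linsys_P}. First I would record that the hypotheses of that proposition are met: for $q=2$ we have $Q = 2 I_n + N$, which is invertible, and $\mathcal{N} = \tfrac14 Q^2 - I = N + \tfrac14 N^2 = N\,(I + \tfrac14 N)$, so $\mathcal{N}$ is nilpotent of degree exactly $n$ because $I + \tfrac14 N$ is invertible while $N^n = 0$ and $N^{n-1} \neq 0$.

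The heart of the argument is to show that $\eta_1$ and $\eta_2$ cannot both vanish. In the AKM case this came for free from the matrix identity $\cosh^2 - \sinh^2 = I$; in the present degenerate (nilpotent) regime no such bounded identity is available, and this is the step that needs a genuinely different idea. The key observation I would exploit is that every matrix appearing in the formulas of Proposition~\ref{prop:linsys_P}, namely $R_1(\mathcal{N},x)$, $R_2(\mathcal{N},x)$, $R_1(-Q^2\mathcal{N},t)$, $R_2(-Q^2\mathcal{N},t)$, $Q^{-1}$ and $Q\mathcal{N}$, is a polynomial in $N$, hence a lower triangular Toeplitz matrix; these all commute, and the first component of $M v$ is simply $M_{11}\, v_1$. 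Reading off the diagonal (Toeplitz) entries, one has $R_1(\cdot,\cdot)_{11} = 1$, $R_2(\mathcal{N},x)_{11} = x$, $R_2(-Q^2\mathcal{N},t)_{11} = t$, $(Q^{-1})_{11} = \tfrac12$, and $(Q\mathcal{N})_{11} = 0$ since $\mathcal{N}$ has zero diagonal. Hence $(\eta_1)_1 = (\tfrac{x}{2} + \imag\, t)\, a_1 + b_1$, where $a_1, b_1$ denote the first components of the constant vectors $a, b$.

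From $\eta_2 = \tfrac12 Q\, \eta_1 + \eta_{1,x}$ together with $Q_{11} = 2$ one then finds $(\eta_2)_1 = (\eta_1)_1 + \tfrac{a_1}{2}$, so that $(\eta_2)_1 - (\eta_1)_1 = \tfrac{a_1}{2}$ is a nonzero constant whenever $a_1 \neq 0$. Thus at no space-time point can $(\eta_1)_1$ and $(\eta_2)_1$ vanish together, and in particular $\eta_1$ and $\eta_2$ never vanish simultaneously. Invoking Proposition~\ref{prop:Lyapunov_gen} exactly as in the AKM case, the solution $\Omega$ of the Lyapunov equation (\ref{NLS_Lyapunov_2}) is then invertible for all real $x$ and $t$, which is precisely the regularity condition for (\ref{NLS_q_sol}). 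The only delicate point, and the place where the polynomial case genuinely differs from the oscillatory one, is the replacement of the hyperbolic Pythagorean identity by this constant-difference-of-first-components computation; once that is in hand, the conclusion is immediate.
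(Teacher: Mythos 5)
Your proof is correct and takes essentially the same approach as the paper's: you exploit the lower-triangular (Toeplitz) structure of all matrices in Proposition~\ref{prop:linsys_P} to read off the first components $\eta_{11} = b_1 + (\tfrac12 x + \imag\, t)\,a_1$ and $\eta_{21} = \eta_{11} + \tfrac12 a_1$, note that their constant nonzero difference prevents simultaneous vanishing when $a_1 \neq 0$, and then invoke Proposition~\ref{prop:Lyapunov_gen} to get invertibility of $\Omega$ everywhere, exactly as the paper does. Your only addition is the explicit check that $q=2$ makes $Q$ invertible and $\mathcal{N} = N(I + \tfrac14 N)$ nilpotent of degree $n$, a hypothesis the paper leaves implicit.
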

\begin{proof}
Since all matrices appearing in Proposition~\ref{prop:linsys_P} are lower triangular, we can easily 
read off the first components of $\eta_1$ and $\eta_2$:
\be
     \eta_{11} = b_1 + (\frac{1}{2} x + \imag \, t) \, a_1 \, , \qquad
     \eta_{21} = \eta_{11} + \frac{1}{2} a_1 \, ,    \label{Peregrine_eta11}
\ee
where $a_1$ ($b_1$) is the first component of $a$ ($b$). Evidently, $\eta_{11}$ 
and $\eta_{21}$ cannot vanish simultaneously if $a_1 \neq 0$. 
Now regularity follows from Proposition~\ref{prop:Lyapunov_gen} in Appendix~B.
\end{proof}

Multiplying $\eta_1$, and thus also $\eta_2$, by a non-zero complex constant, leaves (\ref{NLS_q_sol}) invariant. 
Therefore the first component of $a$ can be set to $1$. As a consequence of Remark~\ref{rem:invariance}, 
we can set 
\bez
     a = (1,0,\ldots,0) \, .
\eez
Moreover, the fact that the linear system for $\eta_1$ is autonomous allows us to eliminate the 
first component of $b$. Indeed, for any shift $x \mapsto x - x_0$, $t \mapsto t - t_0$, 
there must be a choice of the parameters that compensates it, since our solution is the general one. 
(\ref{Peregrine_eta11}) with $a_1=1$ shows that this requires $b_1 \mapsto b_1 + \frac{1}{2} x_0 + \imag \, t_0$. 
Hence, without restriction of generality we may assume that
\bez
   b = (0,b_2,\ldots,b_n) \, , 
\eez
with complex constants $b_i$.

\begin{example}
Let $n=2$ and $Q$ the $2 \times 2$ Jordan block with eigenvalue $2$. We have
\bez
    \eta_1 = \left( \begin{array}{c} \vartheta \\
                    \frac{2}{3} \, \vartheta^3  - \frac{1}{4} x + b_2
                    \end{array} \right) \, , \quad
    \eta_2 = \left( \begin{array}{c} \vartheta + \frac{1}{2}  \\
                    \frac{2}{3} \vartheta^3 + \vartheta^2 + \frac{1}{2} \, \vartheta 
                     - \frac{1}{4} x + b_2 - \frac{1}{4}   
                    \end{array} \right)  \, ,     
\eez
where $\vartheta = \frac{1}{2} x + \imag \, t$. We obtain a quasi-rational solution of the focusing NLS equation. 
Replacing $x$ by $x-\frac{1}{2}$ and writing 
$b_2 = \beta + \frac{1}{6} - \frac{1}{8} \, \imag \, \alpha$, with real constants $\alpha$ and $\beta$, 
it can be written as
\bez
    \mathfrak{q} = 
    \left( 1 - \frac{G + \imag \, H}{D} \right) \, e^{2\,\imag\,t} \, , 
\eez    
where
\bez
  && G = \frac{1}{3} \, x^4 + \frac{1}{2} \, g_2(t) \, x^2 + 4 \, \beta \, x
         + g_0(t)  \, , \qquad 
     H =  \frac{4}{3} \, t \, x^4 + h_2(t) \, x^2 + 16 \, \beta \, t \, x
          + h_0(t)   \, ,  \\
  && D = 256 \, \det(\Omega_{(2)}) = \frac{1}{9} \, x^6 + \frac{1}{12} \, g_2(t) \, x^4
          - \frac{4}{3} \, \beta \, x^3 + f_2(t) \, x^2 + \beta \, g_2(t) \, x + f_0(t)  \, ,  \\
  && f_0(t) = \frac{64}{9} \, t^6 + 12 \, t^4 - \frac{4}{3} \, \alpha \, t^3 +\frac{11}{4}\, t^2
              - \frac{3}{4} \, \alpha \, t  + \frac{1}{16} \, \alpha^2 + 4 \, \beta^2 + \frac{1}{64} \, , \\
  && g_0(t) = \frac{80}{3} \, t^4 + 6 \, t^2 - \alpha \, t -\frac{1}{16} \, , \qquad 
     h_0(t) = \frac{64}{3} \, t^5 + \frac{8}{3} \, t^3 - 2 \, \alpha \, t^2 -\frac{5}{4} \, t
              + \frac{1}{8} \, \alpha \, , \\ 
  && f_2(t) = \frac{16}{3} \, t^4 - 2 \, t^2 +\alpha \, t + \frac{3}{16} \, , \quad
     g_2(t) = 16 \, t^2 + 1 \, , \quad 
     h_2(t) = \frac{32}{3} \, t^3 - 2 \, t + \frac{1}{2} \, \alpha  \, .
\eez
Up to differences in notation and scalings of $\alpha$ and $\beta$,  this is the second order Peregrine 
breather in \cite{Duba+Matv11}. It contains the second order Peregrine solution in \cite{AAT09,AAS09} 
as the special case $\alpha = \beta =0$. 
\hfill $\Box$
\end{example}

Fig.~\ref{fig:2nd_Peregrine} shows plots of the modulus of $\mathfrak{q}$ for the first few members 
of the infinite family of higher Peregrine breathers. Using computer algebra, expressions for the 
higher Peregrine breathers up to the tenth order have already been obtained (see \cite{Gail+Gast15,Gail15}).
The role played by the parameters of the higher Peregrine breathers has been clarified in \cite{AKA11,KAA11,KAA13}. 

\begin{figure} 
\begin{center}
\includegraphics[scale=.4]{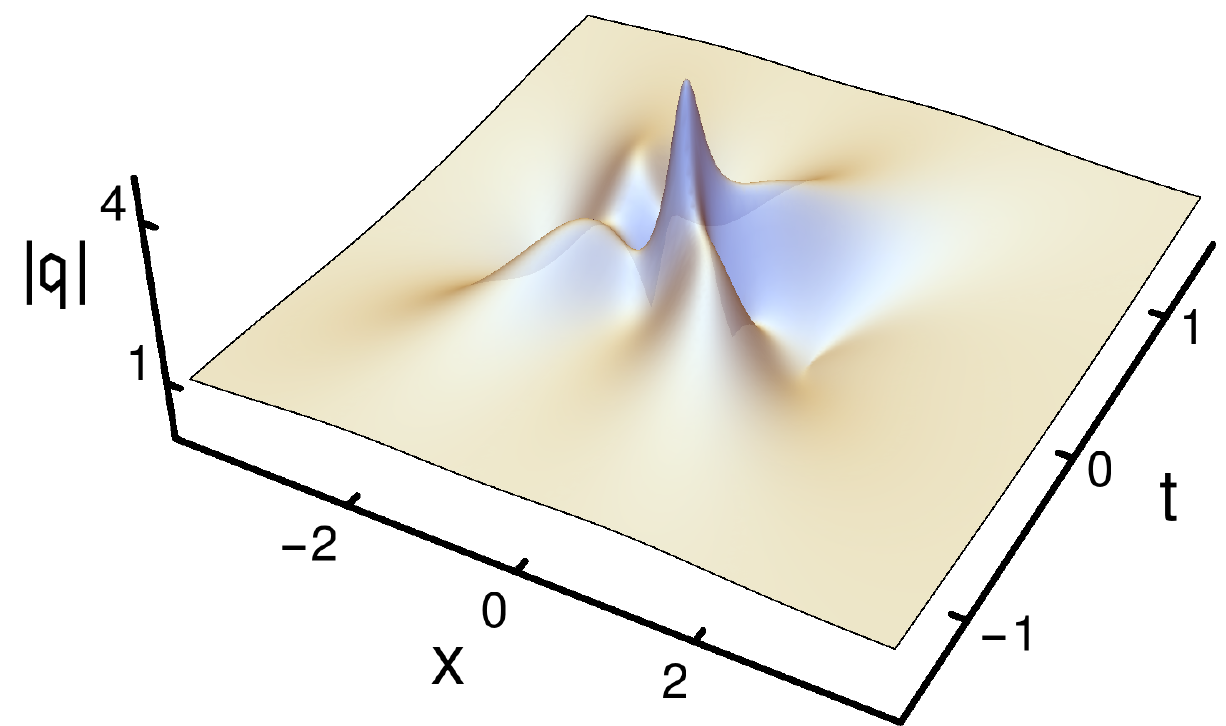} 
\hspace{.3cm}
\includegraphics[scale=.4]{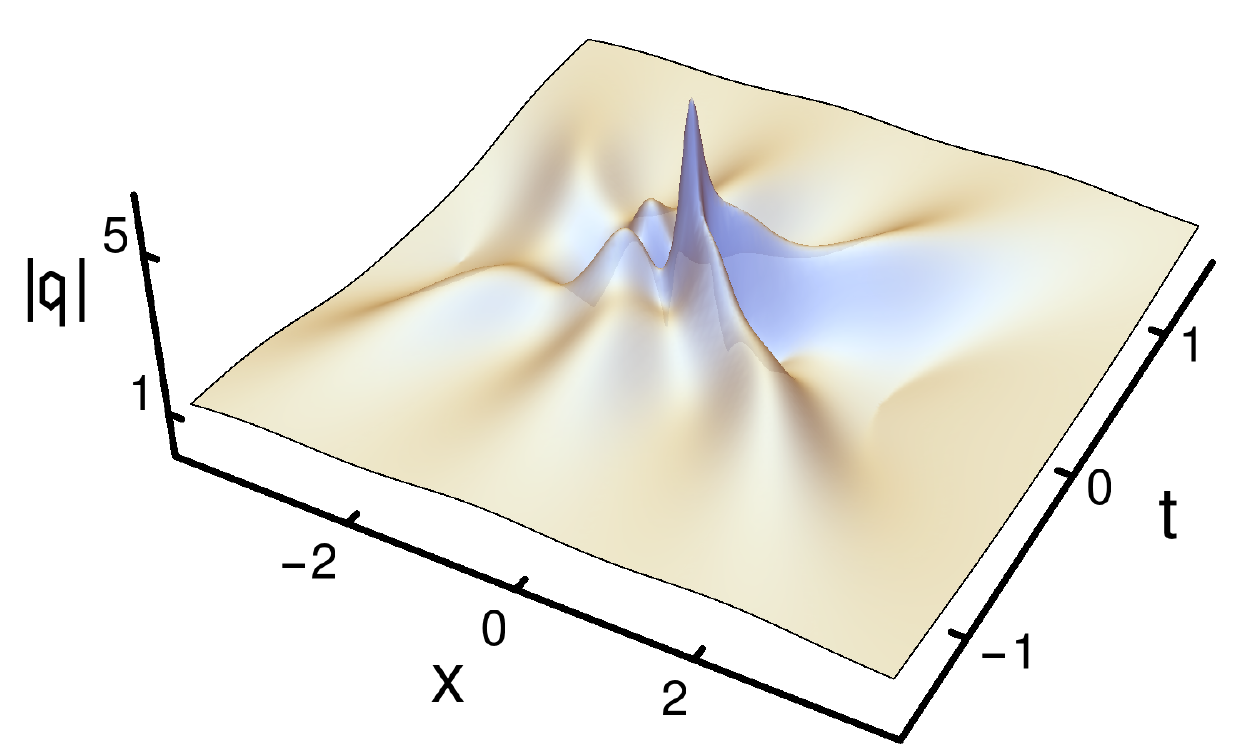}
\hspace{.3cm}
\includegraphics[scale=.4]{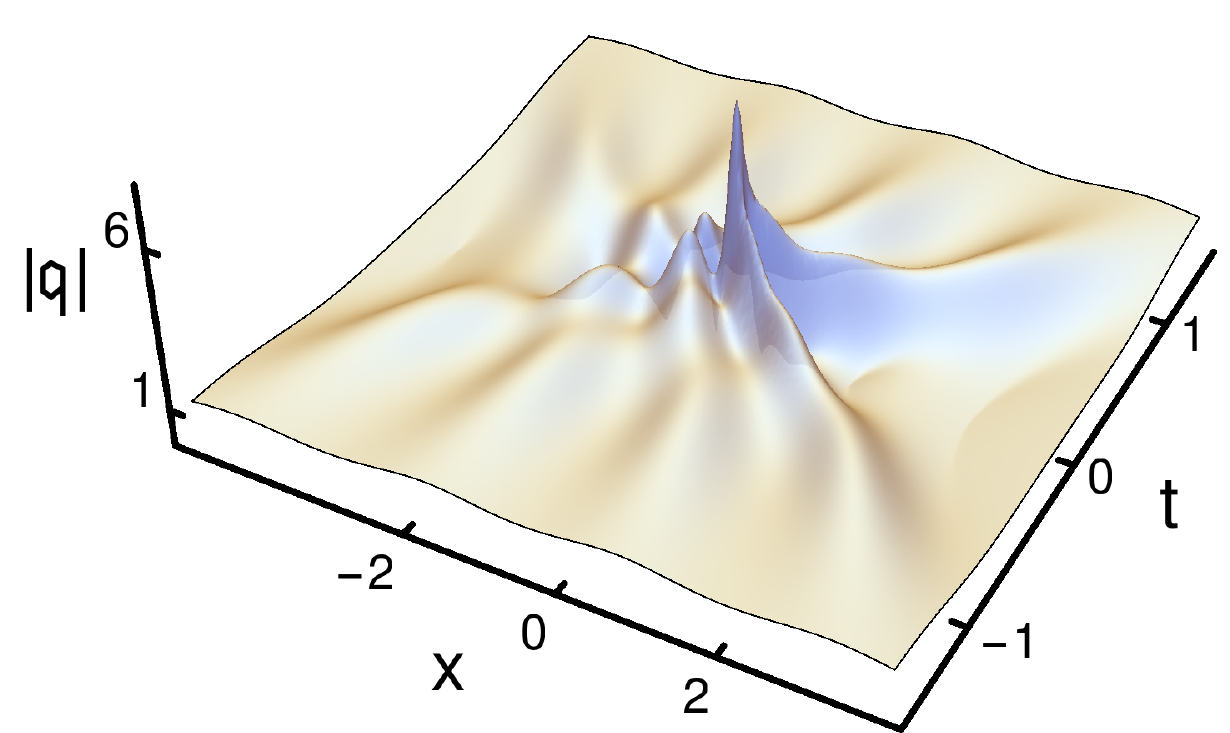}
\end{center}
\caption{Plots of the modulus of the second, third and fourth order Peregrine breather solution. Here 
we set $b = 0$. }
\label{fig:2nd_Peregrine} 
\end{figure}

\subsection{Nonlinear superpositions} 
If $Q = \mbox{block-diagonal}(Q_1, \ldots , Q_M)$, 
then the linear system (\ref{eta_lin_sys}) splits into the corresponding equations for the 
blocks $\{ Q_k \}$,
\bez
     \eta^{[k]}_{1,xx} - ( \frac{1}{4} Q_k^2 - I_k ) \, \eta^{[k]}_1 = 0 \, , \quad
     \imag \, \eta^{[k]}_{1,t} + Q_k \, \eta^{[k]}_{1,x} = 0 \, , \quad
     \eta^{[k]}_2 = \frac{1}{2} Q_k \, \eta^{[k]}_1 + \eta^{[k]}_{1,x}  \qquad k=1,\ldots,M \, .
\eez
Given solutions\footnote{In all but one of the $\eta^{[k]}$, representing (higher) breathers or rogue waves, 
we shall now allow $a_1$ to be different from $1$.} 
$\eta^{[k]}$, $k=1,\ldots,M$, 
all what remains to be done is to solve the Lyapunov equation with the above $Q$, and 
then to compute (\ref{NLS_q_sol}). An $n$th order AKM or Peregrine breather can be obtained as a special limit of such 
a (nonlinear) superposition of $n$ single breathers. But, of course, one can also superpose 
different (higher order) breathers. 
Fig.~\ref{fig:A2+P2} shows a superposition of a second order 
Akhmediev breather and a second order Peregrine breather. Also see \cite{Taji+Wata98} for other examples.

\begin{figure} 
\begin{center}
\includegraphics[scale=.4]{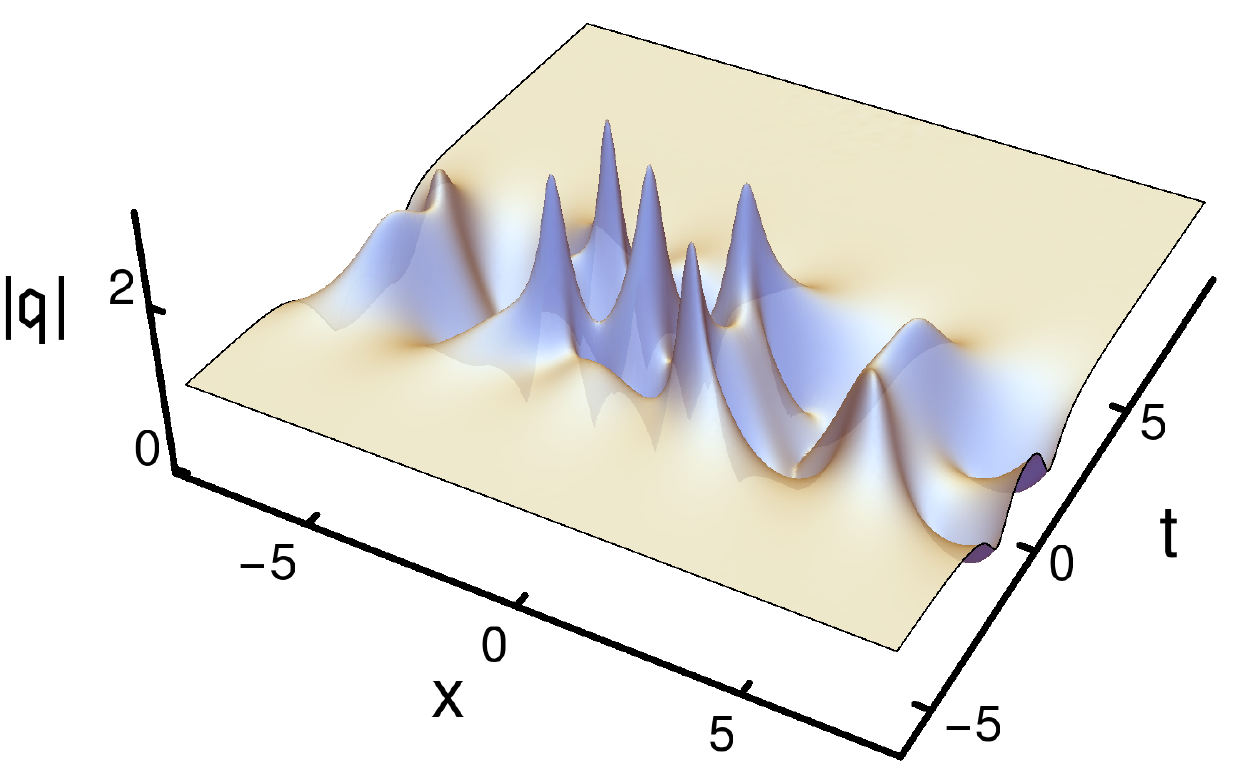} 
\hspace{.3cm}
\includegraphics[scale=.4]{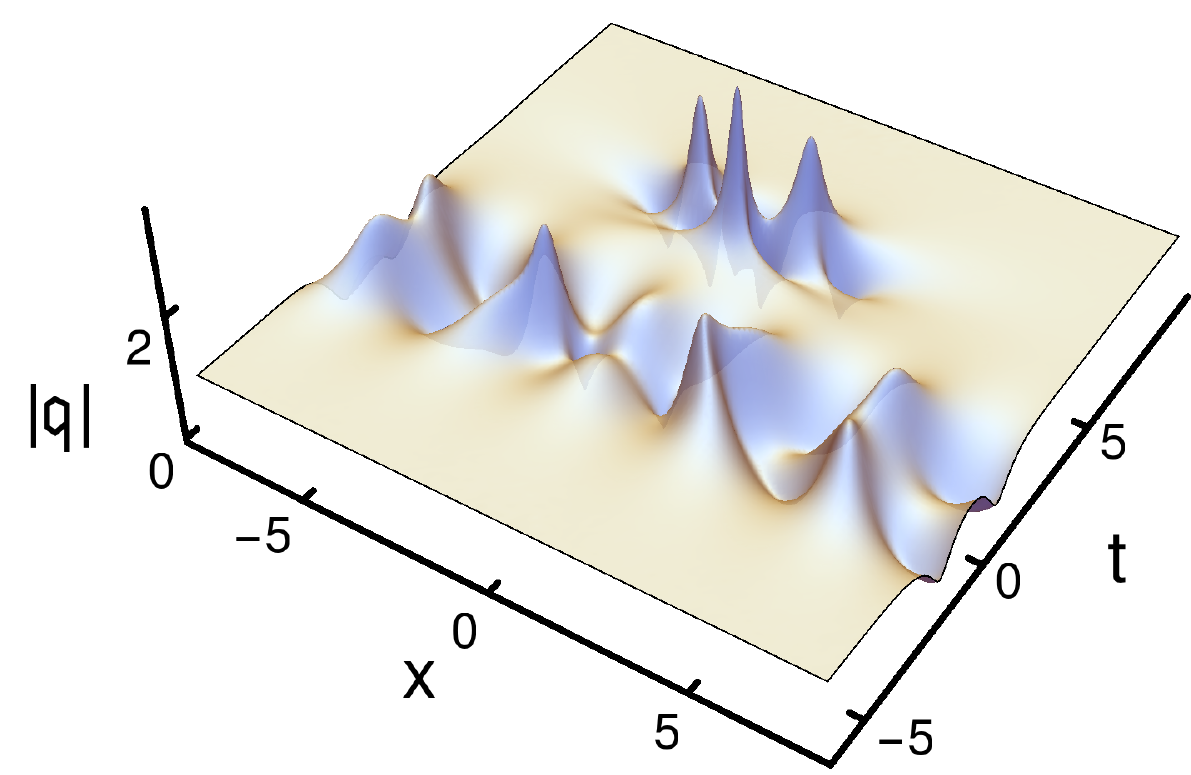} 
\end{center}
\caption{Plots of the modulus of $\mathfrak{q}$ for NLS solutions that are (nonlinear) superpositions 
of a second order Akhmediev breather (with $q=1$, $a_1=1$ and $c_1=c_2=0$) and a second order Peregrine breather 
(with $a_1=1$, $b_1=b_2=0$ and $t \mapsto t_0$). In the left plot we chose $t_0=0$, in the right $t_0=4$. In the 
latter case the Akhmediev breather appears at $t=0$ and (interpreting $t$ as time) essentially decays before 
the Peregrine breather shows up and then also decays. }
\label{fig:A2+P2} 
\end{figure}

\section{Conclusions}
\label{sec:concl}
In this work we presented a new derivation of the families of higher order breathers and rogue waves 
of the focusing NLS equation, using a vectorial Darboux transformation. The latter involves a rank two 
Lyapunov equation where the matrix $Q$ entering its left hand side is a Jordan block. The structure of 
the breathers and rogue waves (including the higher order Peregrine breathers) is then essentially determined 
by the corresponding solution of the Lyapunov equation. The different expressions for 
the higher AKM and Peregrine breathers, presented in this work, should be helpful for further explorations 
of the structure and properties of the members of this important family of exact solutions 
of the focusing NLS equation. 

The method presented in this work can easily be extended to vector and matrix versions of the NLS equation
(also see \cite{BDCW12,VSL13,Dega+Lomb13,LGZ14,MQG15} for different methods), its hierarchy \cite{AKCBA16}, 
self-consistent source extensions of the NLS equation (see \cite{CDMH16} and references cited there), 
and some related integrable equations. This will be addressed in forthcoming work.

\vspace{.3cm}
\noindent
\textbf{Acknowledgment}. The authors would like to thank Vladimir B. Matveev for some very motivating discussions 
about rogue waves.   
O.C. has been supported by an Alexander von Humboldt fellowship for postdoctoral researchers.

\renewcommand{\theequation} {\Alph{section}.\arabic{equation}}
\renewcommand{\thesection} {\Alph{section}}

\makeatletter
\newcommand\appendix@section[1]{
  \refstepcounter{section}
  \orig@section*{Appendix \@Alph\c@section: #1}
  \addcontentsline{toc}{section}{Appendix \@Alph\c@section: #1}
}
\let\orig@section\section
\g@addto@macro\appendix{\let\section\appendix@section}
\makeatother

\begin{appendix}

\section{Solution of the rank one Lyapunov equation with a Jordan block spectral matrix}
\label{app:Lyapunov_rank_one}
In this appendix we consider the rank one Lyapunov equation (see (\ref{NLS_Lyapunov_3}), or 
(\ref{Lyapunov_rank_one})) with $Q$ an $n \times n$ 
lower triangular Jordan block (\ref{Q_Jordan_block}), with eigenvalue $q$. 
If $\Re(q) \neq 0$, there is a unique solution and, as a consequence, $X$ is 
then Hermitian, i.e., $X^\dagger = X$. The following result is certainly known (also see, e.g., \cite{Zhang+Zhao13,XZZ14}, in 
the context of integrable systems), 
though we did not find a convenient reference. 

\begin{proposition}
\label{prop:rank_one}
Let $a = (a_1,\ldots,a_n)^\intercal$, with $a_i \in \bbC$, and
$\kappa := 2 \, \mathrm{Re}(q) \neq 0$. Then the Lyapunov equation 
\bez
       Q X + X Q^\dagger = a a^\dagger \, ,   
\eez
with the Jordan block (\ref{Q_Jordan_block}), has the unique solution
\be
      X = \kappa \, A \, K \, D \, S \, D \, K \, A^\dagger \, ,    \label{prop_X}
\ee
where
\be
      A = \left( \begin{array}{ccccc} a_1 & 0      & \cdots & \cdots & 0      \\ 
                                      a_2 & a_1    & 0      & \cdots & 0      \\ 
                                   \vdots & \ddots & \ddots & \ddots & \vdots \\
                                   \vdots & \ddots & \ddots & \ddots & 0      \\
                                      a_n & \cdots & \cdots & a_2 & a_1 
                 \end{array} \right) \, , \qquad
     \begin{array}{l} 
         D = \mathrm{diag}(1,-1,1,\ldots,-(-1)^n) \, ,  \\ \\
         K = \mathrm{diag}(\kappa^{-1},\kappa^{-2},\kappa^{-3},\ldots,\kappa^{-n}) \, , 
     \end{array}    
                                   \label{prop_A,D}
\ee
and $S$ is the $n \times n$ symmetric Pascal matrix
\be
     S = (s_{i,j}) \, , \qquad s_{i,j} = \frac{(i+j-2)!}{(i-1)! \, (j-1)!} = {i+j-2 \choose i-1} \, .  
              \label{prop_Pascal}
\ee
The determinant of $X$ is given by
\be
     \det(X) = \kappa^{-n^2} \, | a_1 |^{2n} \, .    \label{detX_m=1}
\ee
                       \hfill $\Box$
\end{proposition}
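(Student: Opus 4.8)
The plan is to combine the uniqueness of the solution with the commutation structure already recorded in Remark~\ref{rem:invariance}, reducing the general right-hand side $a a^\dagger$ to the special one $e_1 e_1^\dagger$, where $e_1 = (1,0,\ldots,0)^\intercal$. First, since $\kappa = 2\,\Re(q) \neq 0$ and $Q$ has $q$ as its only eigenvalue, we have $\sigma(Q) = \{q\}$ and $\sigma(-Q^\dagger) = \{-q^\ast\}$, and these are disjoint precisely because $q + q^\ast = \kappa \neq 0$. Thus the spectrum condition (\ref{spec}) holds and the Lyapunov equation has a \emph{unique} solution, so it suffices to exhibit one solution and identify it with the claimed $X$. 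Now observe that the matrix $A$ in (\ref{prop_A,D}) is lower triangular Toeplitz, hence commutes with the Jordan block $Q$, and satisfies $A\,e_1 = a$. Therefore, if $X_0$ solves the reduced equation $Q X_0 + X_0 Q^\dagger = e_1 e_1^\dagger$, then $X := A X_0 A^\dagger$ satisfies
\[
Q X + X Q^\dagger = A (Q X_0 + X_0 Q^\dagger) A^\dagger = A\, e_1 e_1^\dagger\, A^\dagger = a a^\dagger,
\]
using $QA = AQ$ and, by taking adjoints, $Q^\dagger A^\dagger = A^\dagger Q^\dagger$. By uniqueness this is the solution, and comparing with (\ref{prop_X}) it remains only to show $X_0 = \kappa\, K D S D K$.

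To solve the reduced equation I would write $Q = q I + N$, with $N$ the nilpotent subdiagonal shift, so that $Q X_0 + X_0 Q^\dagger = \kappa X_0 + N X_0 + X_0 N^\intercal$. Writing $X_0 = (x_{ij})$ and noting that $N X_0$ shifts rows down while $X_0 N^\intercal$ shifts columns right, the equation becomes, in components, the recursion
\[
\kappa\, x_{ij} + x_{i-1,j} + x_{i,j-1} = \delta_{i1}\delta_{j1},
\]
with the convention $x_{0,j} = x_{i,0} = 0$. From the claimed form $X_0 = \kappa K D S D K$ one reads off $x_{ij} = \kappa^{1-i-j}(-1)^{i+j}\binom{i+j-2}{i-1}$; substituting this into the recursion, the $\kappa$-power and the sign factor out and the remaining bracket collapses by Pascal's rule, $\binom{i+j-3}{i-2}+\binom{i+j-3}{i-1}=\binom{i+j-2}{i-1}$, while the base case $(i,j)=(1,1)$ gives $x_{11}=\kappa^{-1}$. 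Negative lower indices automatically produce the vanishing boundary terms, so the verification is uniform. This establishes $X_0 = \kappa K D S D K$ and hence (\ref{prop_X}).

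For the determinant, the factorization $\det(X) = \det(A)\,\det(X_0)\,\det(A^\dagger) = |a_1|^{2n}\,\det(X_0)$ follows because $A$ is lower triangular with all diagonal entries equal to $a_1$. Then $\det(X_0) = \kappa^n \det(K)^2 \det(D)^2 \det(S)$, where $\det(K) = \kappa^{-n(n+1)/2}$, $\det(D)^2 = 1$, and the classical identity $\det(S)=1$ for the symmetric Pascal matrix (seen via its factorization $S = L L^\intercal$ with $L$ the unit lower triangular Pascal matrix). This yields $\det(X_0) = \kappa^{\,n-n(n+1)} = \kappa^{-n^2}$ and hence (\ref{detX_m=1}). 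I expect the only genuine bookkeeping hurdle to be keeping the signs and $\kappa$-powers consistent in the component formula for $X_0$; the structural inputs — the commutation reduction, Pascal's rule, and $\det(S)=1$ — are all standard, so no step should present a conceptual obstacle.
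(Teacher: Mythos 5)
Your proposal is correct and follows essentially the same route as the paper: reduce to the right-hand side $e_1 e_1^\dagger$ via the lower triangular Toeplitz matrix $A$ commuting with $Q$, solve the reduced equation componentwise through the recursion that Pascal's rule satisfies, and get the determinant from $\det(A) = a_1^n$, $\det(D)^2=1$, $\det(K)=\kappa^{-n(n+1)/2}$, and $\det(S)=1$ via the Cholesky factorization $S = L L^\intercal$. The only cosmetic difference is that the paper derives the entries $s_{i,j}$ from the recurrence while you verify the claimed formula against it, which is equally valid given uniqueness.
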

\begin{proof}
The lower triangular Toeplitz matrix $A$ in (\ref{prop_A,D})
commutes with the $n \times n$ Jordan block matrix $Q$ and achieves that
$a = A \, ( 1 , 0, \ldots, 0 )^\intercal$ (also see Lemma~6.5 in \cite{DMH10NLS}). 
Given a solution to the special Lyapunov equation
\bez
  Q \, Y + Y \, Q^\dagger = \left( \begin{array}{cccc} 1 & 0 & \cdots & 0 \\ 
                                           0 & 0 & \cdots & 0 \\
                                          \vdots & \vdots & \vdots & \vdots \\ 
                                          0 & \cdots & \cdots & 0 \end{array} \right) \, ,
\eez
a solution of (\ref{Lyapunov_rank_one}) is obtained as $X = A \, Y \, A^\dagger$. Assuming $\kappa \neq 0$, 
there are unique solutions $X$, respectively $Y$. The last relation implies 
\bez
      \det(X) = \det(Y) \, |\det(A)|^2 = \det(Y) \, |a_1|^{2n} \, .
\eez
It remains to compute $Y$ and then $\det(Y)$. The Lyapunov equation for $Y$ leads to  
linear recurrence relations. 
Writing
\bez
      Y_{i,j} = (-1)^{i+j} \, \kappa^{1-i-j} \, s_{i,j}  \, ,
\eez
we find that the coefficients $s_{i,j}$ are symmetric ($s_{i,j} = s_{j,i}$) and determined by 
\bez
       s_{1,1} = 1 \, , \quad 
       s_{1,j} = s_{1,j-1} \, , \quad  
       s_{i,j} = s_{i-1,j} + s_{i,j-1}  \qquad  i,j=2,\ldots,n \,  \, .
\eez
The solution is given by 
\bez
     s_{i,j} = \frac{(i+j-2)!}{(i-1)! \, (j-1)!} = {i+j-2 \choose i-1} 
               \qquad \quad  i,j=1,\ldots,n \, .
\eez
The matrix $S = (s_{i,j})$ is the $n \times n$ truncation of the infinite symmetric Pascal matrix 
(see, e.g., \cite{Edel+Strang04,Yates14}). 
Since
\bez
     Y_{i,j} = \kappa \, (-\kappa)^{-i} \, s_{i,j} \, (-\kappa)^{-j} \, ,
\eez
we have $Y = \kappa \, K \, D \, S \, D \, K$, and thus 
\bez
     \det(Y) = \kappa^{-n^2} \, \det(S) = \kappa^{-n^2} \, ,
\eez
since $\det(S)=1$ as an immediate consequence of the Cholesky decomposition (see, e.g., \cite{Edel+Strang04,Yates14}) 
\be
       S = L \, L^\intercal \, ,   \label{S_fact}
\ee
with the lower triangular Pascal matrix 
\bez
      L = (\ell_{i,j}) \, , \quad
      \ell_{i,j} = \left\{ \begin{array}{c@{\quad}l} {i-1 \choose j-1} & \mbox{if } j \geq i \\ 
                                                     0 & \mbox{otherwise} \end{array} \right. \, .
\eez
This completes the proof of Proposition~\ref{prop:rank_one}. 
\end{proof}

In particular, 
if $a_1 \neq 0$, the solution 
of the Lyapunov equation is invertible.\footnote{It is known (see \cite{Hearon77}, for example) that 
if $a_1 \neq 0$, then $(Q,a)$ is \emph{controllable} and the solution of the Lyapunov equation is invertible. }

\section{A result about the general Lyapunov equation}
In this appendix, we consider the general Lyapunov equation
\be
     Q \, X + X \, Q^\dagger = W \, ,         \label{Lyapunov_eq_W}
\ee
where $W$ is a Hermitian $n \times n$ matrix. If $Q$ and 
its negative Hermitian adjoint $-Q^\dagger$ satisfy the spectrum condition (\ref{spec}), 
i.e., if they have no eigenvalue in common, there is a unique solution $X$, for any Hermitian $W$.
As a consequence, $X$ is then Hermitian, i.e., $X^\dagger = X$. 
The matrix $W$ can be expressed as a sum of rank one matrices (dyadic products),
\be
     W = \sum_{k=1}^m V_k \, V_k^\dagger \, ,     \label{W_decomp}
\ee
with linearly independent vectors $V_k$. 
Assuming the spectrum condition, we have
\bez
       X = \sum_{k=1}^m X_k  \, ,
\eez
where $X_k$ solves the rank one Lyapunov equation 
\bez
      Q \, X_k + X_k \, Q^\dagger = V_k \, V_k^\dagger \, .   
\eez    

\begin{proposition}
\label{prop:Lyapunov_gen}
Let $Q$ be a lower triangular $n \times n$ Jordan block with eigenvalue $q$ and $\Re(q) \neq 0$. 
If the first component of one of the vectors $\{ V_k \}$ is 
different from zero, then the solution of (\ref{Lyapunov_eq_W}) is invertible. 
\end{proposition}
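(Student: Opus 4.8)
The plan is to leverage the explicit rank one solution from Proposition~\ref{prop:rank_one} together with the additivity $X = \sum_{k=1}^m X_k$ noted just above the statement, and to show that $X$ is in fact a \emph{definite} matrix, hence invertible. Write $\kappa = 2\,\Re(q) \neq 0$. For each $k$, the matrix $X_k$ solves the rank one Lyapunov equation with right hand side $V_k V_k^\dagger$, so by Proposition~\ref{prop:rank_one} it equals $\kappa\, A_k\, K\, D\, S\, D\, K\, A_k^\dagger$, where $A_k$ is the lower triangular Toeplitz matrix built from the entries of $V_k$ (so its diagonal entry is the first component of $V_k$), and $K,D,S$ are as in (\ref{prop_A,D})--(\ref{prop_Pascal}).

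The key step is to rewrite each $X_k$ as a scalar multiple of a Hermitian square. Using the Cholesky factorization $S = L\,L^\intercal$ from (\ref{S_fact}), together with the fact that $K$ and $D$ are real diagonal matrices, one finds
\bez
   K\,D\,S\,D\,K = (K D L)(K D L)^\intercal \, ,
\eez
so that, setting $B_k := A_k\, K\, D\, L$,
\bez
   X_k = \kappa\, B_k\, B_k^\dagger \, .
\eez
Because $K$, $D$ and $L$ are all invertible (indeed $L$ is lower triangular with unit diagonal), $B_k$ is invertible precisely when $A_k$ is, i.e.\ precisely when the first component of $V_k$ is nonzero. Consequently $\mathrm{sign}(\kappa)\, X_k = |\kappa|\, B_k\, B_k^\dagger$ is positive semidefinite for every $k$, and it is strictly positive definite exactly for those $k$ whose vector $V_k$ has nonzero first component.

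To conclude I would sum over $k$. By hypothesis there is an index $k_0$ with the first component of $V_{k_0}$ nonzero, so $\mathrm{sign}(\kappa)\, X_{k_0}$ is positive definite while every other term $\mathrm{sign}(\kappa)\, X_k$ is positive semidefinite. Hence
\bez
   \mathrm{sign}(\kappa)\, X = \sum_{k=1}^m \mathrm{sign}(\kappa)\, X_k
\eez
is a sum of positive semidefinite matrices containing at least one positive definite summand, and is therefore positive definite. Thus $X$ is a definite (positive or negative) Hermitian matrix, and in particular invertible.

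The argument is essentially complete once the factorization is verified; the only point demanding care is the bookkeeping of the sign of $\kappa$, which is why I phrase everything in terms of $\mathrm{sign}(\kappa)\,X$ rather than assuming $\Re(q)>0$. I note that the linear independence of the $\{V_k\}$ plays no role here: a single nonzero first component suffices, because positive definiteness is preserved under the addition of positive semidefinite matrices. The main potential obstacle would be the positive definiteness of $S$, but this is immediate from the Cholesky decomposition (\ref{S_fact}) since $L$ is invertible.
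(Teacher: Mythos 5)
Your proof is correct and takes essentially the same route as the paper's: both rest on the explicit rank-one solution (\ref{prop_X}) and the Cholesky factorization (\ref{S_fact}) of the Pascal matrix to exhibit each $X_k$ as $\kappa$ times a Hermitian square, with the distinguished term positive definite and the rest positive semidefinite. The differences are only presentational---where the paper concludes via the factorization $\det(X)=\det(X_1)\,\det\bigl( I + X_1^{-1/2} \bigl(\sum_{k\geq 2} X_k\bigr) X_1^{-1/2} \bigr)$, you sum the semidefinite terms directly, and you track $\mathrm{sign}(\kappa)$ explicitly, a point the paper's wording (``positive definite'') tacitly glosses over when $\Re(q)<0$.
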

\begin{proof} 
Without restriction, let the first component $a_1$ of $a := V_1$ be non-zero. Then $X_1$ is given by 
(\ref{prop_X}). Since a symmetric Pascal matrix is positive definite \cite{Bhat07}, (\ref{prop_X}) 
is Hermitian and positive definite if $a_1 \neq 0$, and thus possesses a positive definite Hermitian square root. 
Then
\bez
    \det(X) = \det(X_1) \, \det\left( I + X_1^{-1/2} \, \Big(\sum_{k=2}^m X_k\Big) \, X_1^{-1/2} \right) \neq 0 \, .
\eez
In the last step we used the fact that, according to (\ref{prop_X}) and (\ref{S_fact}), each $X_k$ has a 
decomposition $X_k = B_k^\dagger \, B_k$, with a matrix $B_k$, so that $X_1^{-1/2} \, \sum_{k=2}^m X_k \, X_1^{-1/2}$ 
is positive semi-definite. 
\end{proof}

\end{appendix}

\small

\end{document}